\long\def\comment#1{}
\newcommand{\kw}[1]{{\ensuremath {\mathsf{#1}}}\xspace}
\newcommand{\cs}{CS}
\newcommand{\e}{\kw{e}}
\newcommand{\head}{\kw{bp}}
\newcommand{\dist}{\kw{dist}}
\newcommand{\skyp}{SP}
\newcommand{\lbop}{\kw{LBOP}}
\newcommand{\findsp}{{\sc Optimal-Path}\xspace}
\newcommand{\computecstwo}{{\sc Compute-Cs-2D}\xspace}
\newcommand{\computecs}{{\sc Compute-Cs}\xspace}
\newcommand{\clbop}{{\sc Compute-LBOP}\xspace}
\newcommand{\proc}{{\sc Procedure}\xspace}
\newcommand{\vf}{{\sc Vertex-Filtering}\xspace}
\newcommand{\ct}{w}
\newcommand{\cost}{w}
\newcounter{definition}[section]
\renewcommand{\thedefinition}{\nthesection.\arabic{definition}}
\newenvironment{definition}{
     \refstepcounter{definition}
     {\vspace{1ex} \noindent\bf  Definition  \thedefinition:}}{
     \vspace{1ex}} 
\newcounter{theorem}[section]
\renewcommand{\thetheorem}{\nthesection.\arabic{theorem}}
\newenvironment{theorem}{\begin{em}
        \refstepcounter{theorem}
        {\vspace{1ex} \noindent\bf  Theorem  \thetheorem:}}{
        \end{em}\vspace{1ex}} 
\newcounter{lemma}[section]
\renewcommand{\thelemma}{\nthesection.\arabic{lemma}}
\newenvironment{lemma}{\begin{em}
        \refstepcounter{lemma}
        {\vspace{1ex}\noindent\bf Lemma \thelemma:}}{
        \end{em}\vspace{1ex}} 
\newcommand{\eop}{\hspace*{\fill}\mbox{$\Box$}}  
\newcommand{\nthesection}{\arabic{section}}
\newcommand{\stitle}[1]{\vspace{1ex} \noindent{\bf #1}}
\begin{document}
%
\title{An Efficient Index Method for the Optimal Route Query over Multi-Cost Networks}
%
%
%
\author{%
{Yajun Yang{\small $^{1}$}, Hang Zhang{\small $^{1}$}, Hong Gao{\small $^{2}$}, Qinghua Hu{\small $^{1}$}, Xin Wang{\small $^{1}$}}%
\vspace{1.6mm}\\
\fontsize{10}{10}\selectfont\rmfamily\itshape
$^{1}$College of Intelligence and Computing, Tianjin University, Tianjin, China\\
\fontsize{9}{9}\selectfont\ttfamily\upshape yjyang@tju.edu.cn, aronzhang@tju.edu.cn, huqinghua@tju.edu.cn, wangx@tju.edu.cn
\vspace{1.2mm}\\
\fontsize{10}{10}\selectfont\rmfamily\itshape
$^{2}$School of Computer Science and Technology, Harbin Institute of Technology, Harbin, China \\
\fontsize{9}{9}\selectfont\ttfamily\upshape
honggao@hit.edu.cn
}

\maketitle

\begin{abstract}
Smart city has been consider the wave of the future and the route recommendation in networks is a fundamental problem in it. Most existing approaches for the shortest route problem consider that there is only one kind of cost in networks. However, there always are several kinds of cost in networks and users prefer to select an optimal route under the global consideration of these kinds of cost. In this paper, we study the problem of finding the optimal route in the multi-cost networks. We prove this problem is NP-hard and the existing index techniques cannot be used to this problem. We propose a novel partition-based index with contour skyline techniques to find the optimal route. We  propose a vertex-filtering algorithm to facilitate the query processing. We conduct extensive experiments on six real-life networks and the experimental results show that our method has an improvement in efficiency by an order of magnitude compared to the previous heuristic algorithms.

\end{abstract}

\begin{IEEEkeywords}
optimal path, multi-cost networks, index
\end{IEEEkeywords}

\IEEEpeerreviewmaketitle

\section{introduction}

\IEEEPARstart{W}ith the rapid developing of the information technology, smart technologies have been widely used to promote the convenience for people's life in the city. Smart city has been attracting more and more attention from academic and industrial community. The intelligent route recommendation is a fundamental problem in smart city. For example, in  traffic networks, the shortest route query is to find a shortest path between two locations. In social networks, the shortest route query is to find the closest relationships such as friendship between two individuals. 

Most existing work about the shortest route problem assume that there is only one kind of cost in the networks. However, the relationships among
various entities are always investigated from several distinct aspects. For example, in traffic networks, the routes between two cities are taken into account with several kinds of cost such as road length, toll fee, traffic congestion and so on. It is inadvisable to choose a shortest path only by one kind of cost because the total toll fee of a route with the minimum length may be too expensive to accept for some users. It is important to find an optimal route under global consideration with people's preference. 

A network is called \emph{multi-cost network} if every edge in it has several kinds of cost. Obviously, the shortest route under one kind of cost may not be the optimal route for some users in multi-cost networks. Score function is proposed by user and it can calculate an overall score based on all kinds of cost to measure the optimality for a route. Note that the score functions given by distinct users may be different. Given a score function $f(\cdot)$, a starting vertex $v_s$ and an ending vertex $v_e$, this paper is to find a route from $v_s$ to $v_e$ with the minimum score and such route is also called an \emph{optimal path} from $v_s$ to $v_e$ under the score function $f(\cdot)$ in the following. 

The traditional shortest path problem can be solved by polynomial algorithm e.g., Dijkstra algorithm, and various index techniques are proposed to improve the efficiency. However, these index techniques cannot be used for the optimal path in the multi-cost networks because the score functions given by distinct users may be different. An index built for a score function $f(\cdot)$ cannot cope with the case of another score function $g(\cdot)$. In addition, we prove the optimal path problem is NP-hard in this paper if the score function is non-linear, e.g., $f(x,y)=x^2+y^2$, and then existing algorithms cannot work under such functions. As discussed in previous studies about traffic networks\cite{springerlink:10.1023/A:1009609820093,
tranportationcmshetty}, the non-linear score
functions are existent widely and reasonable in real-life. For example, in special conditions such as traffic jam occurring, the traveling time and fuel consumption are nonlinear (e.g., quadratic, convex and so on) function with the distance from source to destination\cite{mokhtarbook}.

In this paper, we develop a novel partition-based index to find the optimal path in multi-cost networks under various linear or non-linear score functions. The main contributions are summarized below. First, we study the problem of the optimal path recommendation in multi-cost networks and prove it is NP-hard. Second, we propose a partition-based index and contour skyline in the index. We prove the problem of computing contour skyline is NP-hard. We give a $2$-approximate algorithm and present that there is no $(2-\epsilon)$-approximate solution in polynomial time if $P \neq NP$. Third, we propose a vertex-filtering algorithm which can filter a
large of proportion of vertices that cannot be passed through by the optimal path. Finally, we confirm the effectiveness and efficiency
of our algorithms using real-life datasets.

The rest of this paper is organized as follows. Section
\ref{prostate} gives the problem statement. 
Section \ref{main-index} introduces the partition-based index and how to
construct it. Section \ref{main-query} proposes a vertex-filtering
algorithm and discusses how to find the optimal path by
partition-based index. We conduct experiments using six real-life
datasets in Section \ref{performance}. The experimental results confirm the effectiveness and
efficiency of our approach. Section \ref{sec-related} discusses the
related works. We conclude this paper in section \ref{conc}.

\comment{
In the past couple of decades, graphs have been widely used to model complex relationships among various entities in real applications, such as
transportation networks, bioinformatics, social networks and so on.
The shortest path query is very important in many real applications.
For example, in transportation networks, the shortest path query is
to find a shortest route between two cities. In social networks, the
shortest path query is to find the closest relationships such as
friendship between two individuals.

However, most existing works assume that there is only one kind of cost on edges. A graph under this assumption is called a
\emph{single-cost graph}. In fact, the relationships among
various entities can be investigated from several different angles,
and then the edges to describe these relationships should be
measured by several kinds of cost. For example, in traffic networks, there may exist several kinds of cost for a road 
between two cities, such as the length of road, traveling time, toll fee and so on.
Consequently, any path between any two vertices in a network also has several kinds of cost. A graph is called \emph{multi-cost
graph} if every edge in it has several kinds of cost. In real life, these kinds of cost may be combined to help user to make
better decisions. It is inadvisable to choose a shortest path
only by one kind of cost. In a transportation network, the
total toll fee of a path with the minimum length may be too
expensive to accept for people. In this case, people prefer to choose another path with much lower toll fee
even though it is slightly longer than the shortest one. Therefore, it is important to find an optimal path under global consideration with people's preference.

Score function $f(\cdot)$ is a preference function proposed by user
to measure the importance of a path. It essentially indicates the
preference of users to distinct cost types. $f(\cdot)$ calculates an overall score based on all kinds of cost for a path . It is worth noting that the
score functions given by different users may be different.
Given a score function $f(\cdot)$, a starting vertex $v_s$ and
an ending vertex $v_e$, this paper is to find a path
from $v_s$ to $v_e$ with the minimum score. Such path is called an
\emph{optimal path} based on score function $f(\cdot)$. 
}

\comment{ Many non-linear cost functions have been proposed in
traditional transportation
problem\cite{springerlink:10.1023/A:1009609820093,
tranportationcmshetty, tranportationKidist}. Non-linear cost
functions mean the costs are not linear with distance.
Therefore, in multi-cost graphs, score function $f(\cdot)$ may be
non-linear. Next, we show existing method to compute shortest path
cannot solve optimal path query problem under

Dijkstra algorithm \cite{Dijkstra59anote} is a classic algorithm to
find the shortest path, which utilizes the following property (sub-path optimality): any sub-path of a shortest path is also a
shortest path. Therefore, only the shortest distances (from source to every vertex) are necessary to be maintained when the shortest path from source to destination is being calculated. All existing approaches are based on the idea of Dijkstra
algorithm: the optimality of a sub-path in a shortest path
\cite{journals/tssc/HartNR68, DBLP:conf/soda/GoldbergH05,
Goldberg06reachfor, DBLP:conf/edbt/XiaoWPWH09,
DBLP:conf/sigmod/Wei10, DBLP:conf/sigmod/Cheng12,
DBLP:conf/icde/Qiao12, DBLP:conf/sigmod/SametSA08}. The frameworks
of these works are that: build an index to maintain the shortest paths
for some pairs of vertices in a graph. Given a
query, algorithms first retrieve the shortest path to be visited
among the vertices in index and then concatenates them by the
shortest paths not in index. However, these methods utilize the property of the
sub-path optimality, both in index building process and online querying process. 
Unfortunately, the sub-path optimality is not correct on the multi-cost graphs when score function is
non-linear. It is detailed in section \ref{sec-challenging-problem}).
Therefore, all existing approaches cannot answer the optimal path
queries on the multi-cost graphs proposed in this work. As discussed in previous works about
traditional transportation
problem\cite{springerlink:10.1023/A:1009609820093,
tranportationcmshetty, tranportationKidist}, the non-linear score
functions are existent widely and reasonable in real world. For example, in special conditions such as transporting emergency materials when natural calamity occurs or transporting military supplies during war time, where carrying network may be destroyed , mileage from some sources to some destination are no longer definite. So the choice of different measures
of distance leads to nonlinear (quadratic, convex and so on) cost function\cite{mokhtarbook}.
}

\comment{
Several works\cite{Martins1984236,Delling:2009,Mandow:2005,DBLP:conf/isat/Chomatek15,DBLP:journals/eor/PulidoMP14,DBLP:journals/jacm/MandowP10} study the multi-criteria 
shortest path (MCSP) problem on multi-cost graphs in recent years. Given a starting vertex $v_s$ and an ending vertex $v_e$, it is to find all ``skyline'' paths from $v_s$ to $v_e$. Most existing works on MCSP are heuristic algorithm to compute all skyline paths $p$ by expanding all the skyline sub-paths from the source to every vertex on $p$. The difference between MCSP and our problem is as follows. MCSP is to find all skyline paths but our problem is only to find one path that is the optimal under the score function. It is obvious that skyline paths is a candidate set of the optimal path. However, the time cost is too expensive to find an optimal path by exhausting all skyline paths. Moreover, these works does not develop any index technique to facilitate the skyline path querying.

The main contributions are summarized below. 
First, we propose a novel problem of the optimal path query over
multi-cost graphs. We prove this problem is NP-hard. Second, we propose a best-first branch and
bound algorithm with three optimizing strategies. Third, we
propose a novel index with lower space cost for multi-cost graphs, named $k$-cluster index,
which makes our algorithm more efficient for large graphs. We present that what is the $k$-cluster index
and how to construct it. To facilitate the optimal path query processing, we introduce the concept of contour skyline. We prove 
the problem to compute contour skyline in 3D and higher dimensional space is NP-hard. We
give a $2$-approximate algorithm to compute contour skyline and
present that there is no $(2-\epsilon)$-approximate solution in polynomial
time if $P \neq NP$. Fourth, we propose a vertex-filtering algorithm which can filter a
large of proportion of vertices that cannot be passed through by
the optimal path. Finally, we confirm the effectiveness and efficiency
of our algorithms using real-life datasets.

The rest of this paper is organized as follows. Section
\ref{prostate} gives the definition of the optimal path over multi-cost graphs and discusses the difficulty of 
the optimal path problem. Section
\ref{main-branch} proposes a branch and
bound algorithm with three optimizing strategies. Section
\ref{main-index} introduces what is the $k$-cluster index and how to
construct it. Section \ref{main-query} proposes a vertex-filtering
algorithm and discusses how to process the optimal path query by
$k$-cluster index. We conduct experiments using five real-life
datasets in Section \ref{performance}. The experimental results confirm the effectiveness and
efficiency of our approach. Section \ref{sec-related} discusses the
related works. We conclude this paper in section \ref{conc}.
}

\section{Problem Statement} \label{prostate}

\subsection{Multi-cost Networks and the Optimal Path}

\begin{definition}
({\bf multi-cost network})
A multi-cost network is a simple directed graph, denoted as $G=(V,E,W)$,
where $V$ and $E$ are the sets of vertices and edges respectively. $W$ is a set
of vectors. Every edge $e \in E$ is represented by $e=(v_i,v_j)$, $v_i,v_j \in V$, and $w(v_i,v_j) \in W$ is the cost vector of $(v_i,v_j)$, $w(v_i,v_j)=(w_1,w_2,\cdots,w_d)$, where $w_i$ is the $i$-th kind of cost value
of edge $(v_i,v_j)$.
\end{definition}

In this paper, we assume $w_i \geq 0$. This assumption
is reasonable, because the cost cannot be less than zero in real
applications. Our work can be easily extended to handle undirected
graphs, an undirected edge is equivalent to two directed
edges. For simplicity, we only discuss the
directed graphs in the following.

A path $p$ is a sequence of vertices $(v_0, v_1, \cdots, v_l)$,
where $v_i \in V$ and $(v_{i-1},v_i)\in E$ 
We use $\ct(p)$ to denote cost vector of path $p$, i.e., $w(p)=(w_1(p),w_2(p),\cdots,w_d(p))$, where $w_x(p)=\sum_{i=1}^{l}w_x(v_{i-1},v_i)$ for $0\leq x \leq d$.



For a path $p$ in $G$, a score function is used to calculate an overall score $f(p)$ base on $w(p)$. The score function $f(\cdot)$ is always monotone increasing, i.e., for two
different paths $p$ and $p'$, if $(\forall i, c_i(p)\leq c_i(p'))
\wedge (\exists i, c_i(p)<c_i(p'))$, then $f(p)<f(p')$. It is a common propertyand its intuitive meaning is that if all
costs of a path $p$ are less than that of $p'$, then
the overall score of $p$ must be less than $p'$. The definition of the
optimal path over the multi-cost networks is given below:



\begin{definition}
({\bf optimal path}) Given a multi-cost network $G$, a score
function $f(\cdot)$, a starting vertex $v_s$ and an ending vertex $v_e$, the optimal path from $v_s$ to $v_e$, denoted as $p^*_{s,e}$, is a
path in $G$ that has the minimum score among all paths from $v_s$ to
$v_e$, i.e., $f(p^*_{s,e})\leq f(p)$ for any $p \in P_{s,e}$, where $P_{s,e}$ is the set of all simple paths from $v_s$ to
$v_e$.
\end{definition}

Fig.~\ref{fig1} illustrates an concrete multi-cost network $G$. The score function in this example is $f(w_1,w_2)=w_1+w_2$. Consider the path $p:
v_s\rightarrow v_1 \rightarrow v_e$ in $G$, its cost vector is $\cost(p)=(10,4)$
and its score is $f(p)=w_1(p)+w_2(p)=10+4=14$. because the score of $p$ is the minimum among all
paths from $v_s$ to $v_e$, then $p$ is the optimal path.

The following theorem shows the problem of finding the optimal path in the multi-cost networks under non-linear score function is NP-hard.

%

%

\begin{theorem}\label{theorem-procomplex}
The problem of finding the optimal path under a non-linear function in the multi-cost networks is NP-hard.
\end{theorem}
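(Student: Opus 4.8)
The plan is to prove NP-hardness by a polynomial-time reduction from the \textsc{Partition} problem, which is well known to be NP-complete. An instance of \textsc{Partition} is a multiset of positive integers $a_1, a_2, \ldots, a_n$ with total sum $\sum_{i=1}^n a_i = 2S$, and the question is whether there is an index set $A \subseteq \{1,\ldots,n\}$ with $\sum_{i\in A} a_i = S$. I would fix the non-linear score function $f(w_1,w_2) = w_1^2 + w_2^2$ (which is monotone increasing on nonnegative costs, consistent with the standing assumption on $f(\cdot)$), so that establishing hardness under this single function already proves the theorem.

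The gadget is a \emph{chain of diamonds}. First I would create vertices $v_0, v_1, \ldots, v_n$, set $v_s = v_0$ and $v_e = v_n$, and for each element $a_i$ install two internally disjoint two-edge detours from $v_{i-1}$ to $v_i$: an ``upper'' detour whose two edge costs are $(a_i,0)$ and $(0,0)$, and a ``lower'' detour whose two edge costs are $(0,a_i)$ and $(0,0)$. Routing through fresh intermediate vertices rather than using parallel edges keeps $G$ a simple directed graph as required, and every $v_s$--$v_e$ path is automatically simple. Choosing, at each stage $i$, the upper or lower detour corresponds exactly to placing $a_i$ into $A$ or its complement; the resulting path $p$ has cost vector $w(p) = (w_1, w_2)$ with $w_1 = \sum_{i\in A} a_i$ and $w_2 = \sum_{i\notin A} a_i$, so that $w_1 + w_2 = 2S$ for \emph{every} path. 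The construction is clearly polynomial in the instance size.

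The crux is a convexity observation. All achievable cost vectors $(w_1,w_2)$ lie on the line $w_1 + w_2 = 2S$, and on this line $f = w_1^2 + w_2^2$ is minimized at $w_1 = w_2 = S$: writing $w_1 = S + t$ and $w_2 = S - t$ gives $f = 2S^2 + 2t^2 \ge 2S^2$, with equality iff $t = 0$. Hence the optimal path has score exactly $2S^2$ if and only if some path attains $w_1 = w_2 = S$, i.e. if and only if the \textsc{Partition} instance is a yes-instance; otherwise the optimal score is strictly larger (at least $2S^2 + 2$, since the $a_i$ are integers). Thus any algorithm computing the optimal path decides \textsc{Partition}, which gives NP-hardness.

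I expect the main obstacle to be not the combinatorics but making explicit that the \emph{non-linearity} is what drives the hardness: under a linear $f$ every path in this gadget has identical score $2S$, so the reduction would collapse, which is exactly consistent with the linear case being polynomial-time solvable. The essential point to get right is that the strict convexity of $f$ along the constraint line converts the ``balance'' requirement of \textsc{Partition} into a score-minimization condition, and that the integrality gap between yes- and no-instances is large enough to be detected exactly.
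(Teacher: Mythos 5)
Your proposal is correct, and it is structurally the same gadget as the paper's proof --- a chain in which each element $a_i$ is encoded by parallel two-edge detours whose costs sit on disjoint coordinates, combined with the squared-sum score function --- but it is driven by a different source problem, and that difference has consequences. The paper reduces from the \emph{minimum sum of squares} problem: partition $n$ numbers into $k$ parts minimizing $\sum_{j=1}^k(\sum_{a_i\in A_j}a_i)^2$, realized by $k$ parallel detours between consecutive vertices, $k$-dimensional cost vectors, and $f(w_1,\dots,w_k)=\sum_j w_j^2$; a path then corresponds to a $k$-way partition and the optimal path to the optimal partition. Your construction is exactly the $k=2$ instantiation, sourced instead from \textsc{Partition}. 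What your route buys: hardness already for the minimal nontrivial dimension $d=2$ and for a single fixed function $w_1^2+w_2^2$, whereas the paper's reduction needs the number of cost types to grow with the instance ($d=k$), so your statement is in this sense sharper; moreover, your explicit integrality gap (optimal score exactly $2S^2$ for yes-instances versus at least $2S^2+2$ for no-instances) makes the argument a clean Karp reduction for the decision version, a bookkeeping step the paper's optimization-to-optimization reduction leaves implicit. What the paper's route buys: the correspondence between paths and arbitrary $k$-way partitions is made visible, which matches the multi-dimensional setting of the rest of the paper, and it avoids having to argue the gap separately since it transfers optimal solutions directly. Both are valid proofs of the theorem; yours isolates more cleanly that the hardness comes from strict convexity of $f$ along the constraint line $w_1+w_2=2S$, which is precisely what fails in the linear case.
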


\begin{proof} We reduce the problem of the minimum sum of squares, which is NP-complete\cite{DBLP:books/fm/GareyJ79}, to this problem. The minimum sum of squares problem is as follows. Given a number set $A=\{a_1,a_2,\cdots,a_n\}$ of size $n$ and an integer $k \leq |A|$, find a partition ${\cal A}^*=\{A_1, A_2, \cdots, A_k\}$ of $A$ such that $\sum_{j=1}^k(\sum_{a_i\in A_j}a_i)^2$ is minimum. Note that $A_j~(1\leq j \leq k)$ cannot be an empty set for an optimal partition ${\cal A}^*$. Given an instance of the minimum sum of squares problem, it can be converted to an instance of the optimal path problem as follows. We create a graph $G$ with $n+1+kn$ vertices, $\{v_1, v_2, \cdots, v_{n+1}\} \cup \{v_{i,j}|1 \leq i \leq n, 1\leq j \leq k\}$. Here, $v_{i,j}(1\leq j \leq k)$ is placed between $v_i$ and $v_{i+1}$. We create the edges in $G$ as follows. For $\forall 1\leq i \leq n$ and $\forall 1\leq j \leq k$, we create an edge $e_{i,(i,j)}$ from $v_i$ to $v_{i,j}$. The cost of edge $e_{i,(i,j)}$ is assigned as $\ct(e_{i,(i,j)})=(0,\cdots,0,\frac{a_i}{2},0,\cdots,0)$, i.e., the $j$-th cost value of $\ct(e_{i,(i,j)})$ is $\frac{a_i}{2}$ and the others are zero. Similarly, we create an edge $e_{(i,j),i+1}$ from $v_{i,j}$ to $v_{i+1}$. The cost of edge  $e_{(i,j),i+1}$ is also $\ct(e_{(i,j),i+1})=(0,\cdots,0,\frac{a_i}{2},0,\cdots,0)$, i.e., the $j$-th cost value of $\ct(e_{(i,j),i+1})$ is $\frac{a_i}{2}$ and the others are zero. Let $v_1=v_s$ and $v_{n+1}=v_e$. Score function is $f(w_1,\cdots,w_k)=\sum_{i=1}^k(w_i)^2$. Here, $(w_1,\cdots,w_k)$ is the cost vector $\ct(p)$ of a path $p$. Obviously, if a path $p$ travels through an edge $e_{i,(i,j)}$, it must travel through $e_{(i,j),i+1}$. We can concatenate $e_{i,(i,j)}$ and $e_{(i,j),i+1}$ as a new edge $e_{i,i+1}^j$ from $v_i$ to $v_{i+1}$. $e_{i,i+1}^j$ is called the $j$-th edge from $v_i$ to $v_{i+1}$ in $G$. The cost of $e_{i,i+1}^j$ is $(0,\cdots,0,a_i,0,\cdots,0)$, i.e., the $j$-th cost value of $\ct(e_{i,i+1}^j)$ is $a_i$ and the others are zero. For any path $p$ from $v_s$ to $v_e$ in graph $G$, the $j$-th cost value $w_j(p)$ of $\ct(p)$ is equal to the sum of the $j$-th cost values of all the edges in $p$. Let $E^j_p$ be the set of all the $j$-th edges in $G$ that $p$ travels through, i.e., $E^j_p=\{e_{i,i+1}^j|e_{i,i+1}^j\in p, 1\leq i \leq n\}$.  Then $\{E^j_p|1\leq j \leq k\}$ corresponds to a partition ${\cal A}=\{A_j|1\leq j \leq k\}$ of $A$, where $A$ is the number set $\{a_1,a_2,\cdots, a_n\}$ and $A_j~(1\leq j \leq k)$ is the number set of the $j$-th cost value of all the edges in $E^j_p$, i.e., $A_j=\{w_j(e)|e\in E^j_p\}$. Consequently, an optimal path $p^*$ with the minimum score corresponds to an optimal partition ${\cal A}^*$ for $A$ such that $\sum_{j=1}^k(\sum_{a_i\in A_j}a_i)^2$ is the minimum. Note that this reduction is in polynomial time. If we find an optimal path from $v_s$ to $v_e$ in $G$ in polynomial time, then we also can find an optimal partition ${\cal A}^*$ for number set $A$. Therefore, the problem of finding the optimal path over the multi-cost graphs is NP-hard. \eop
\end{proof}


\begin{figure}[t]
\centering
\includegraphics[width=1.6in]{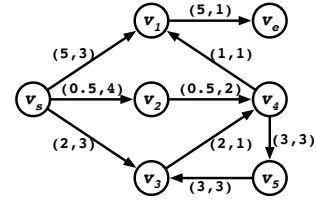}
\caption{An example of multi-cost graph $G(V,E)$} \label{fig1}
\end{figure}

\subsection{Challenging Problem} \label{sec-challenging-problem}



If score function $f(\cdot)$ is linear, i.e., for any two consecutive edges
$(v_x,v_y)$ and $(v_y,v_z)$, we have
\begin{displaymath}
f(w(v_x,v_y)+w(v_y,v_z))=f(w(v_x,v_y))+f(w(v_y,v_z))
\end{displaymath}
then $f(w(v_x,v_y))$ can be considered as the single-one weight of the edge $(v_x,v_y)$ for any edge in $G$. Obviously, $f(w_1,w_2)=w_1+w_2$ is a linear function.
In this case, the problem of finding the optimal path in the multi-cost networks can be solved in polynomial time by the existing shortest path algorithms, e.g., Dijkstra algorithm. The shortest path $p$ based on the weight $f(w(v_x,v_y))$
is exactly the optimal in the multi-cost networks. Otherwise, there is
another path $p'$ such that $f(p')<f(p)$. By the linearity of score
function, we have
\begin{align*}
&f(p')=f(\sum_{i=1}^{l-1}w(v'_{i},v'_{i+1}))=\sum_{i=1}^{l-1}f(w(v'_{i},v'_{i+1}))\\
<&f(p)=f(\sum_{i=1}^{r}w(v_i,v_{i+1}))=\sum_{i=1}^{r}f(w(v_1,v_{i+1}))
\end{align*}
which is in contradiction to the correctness of Dijkstra algorithm.
Most existing works on the shortest path problem propose various index techniques to improve the efficiency. However, the existing index techniques cannot be used for this problem even though the score function is linear. The reason is the score functions given by distinct users may be different. An index built for a score function $f(\cdot)$ cannot cope with the case of another score function $g(\cdot)$.


If score function $f(\cdot)$ is non-linear, that is, 
\begin{displaymath}
f(w(v_x,v_y)+w(v_y,v_z)) \neq f(w(v_x,v_y))+f(w(v_y,v_z))
\end{displaymath}
then the optimal path problem in the multi-cost networks cannot be solved by existing methods for traditional shortest path problem. Most of these methods are based on the following property: any sub-path of a shortest path is also a shortest path. They maintain the shortest paths for some pairs of vertices in an index and answer the query by concatenating the shortest paths to be visited inside index and outside index.
%
%
However, the property of the optimal sub-path is not correct for the multi-cost graphs when the score function is non-linear. Consider the example in
Fig.~\ref{fig1}, if the score function is set as $f(w_1,w_2)=w_1^2+w_2^2$, which is monotonically increasing in the region of $\{x\geq 0, y\geq 0\}$, then the optimal path from $v_s$ to $v_5$ is $v_s\rightarrow v_2\rightarrow v_4\rightarrow v_5$. Note that the sub-path $p:s\rightarrow v_2\rightarrow v_4$
is not the optimal path from $v_s$ to $v_4$, because its score is
$f(1,6)=37$, which is less than the score $f(4,4)=32$ of path
$p':s\rightarrow v_3\rightarrow v_4$. This example states a sub-path of an
optimal path may be not the optimal one in the multi-cost networks.


\comment{ In addition, Dijkstra algorithm will not update the cost
of visited vertex, it makes the optimal path based on non-linear
function cannot be found on multi-cost graph. Consider the example
in Fig.~\ref{fig1}, we need to find the optimal path from $s$ to $t$
based on score function $f(x,y)=x^2+y^2$. Since the score of $s$ to
$d$ is less than $s$ to $c$, algorithm will process vertex $d$
preemptively. When processing vertex $c$, because vertex $d$ has
been visited, algorithm will not update the score of $s$ to $d$.
Therefore, the optimal path from $s$ to $t$ computed by Dijkstra
algorithm is: $s\rightarrow d\rightarrow t$, and its score is
$f(10,4)=116$. But in fact, the optimal path from $s$ to $t$ is:
$s\rightarrow a\rightarrow c\rightarrow d\rightarrow t$, and its
score is $f(7,8)=113<f(10,4)$. }


Enumeration is a straightforward method to compute the optimal path in the multi-cost graphs. 
Given a starting vertex $v_s$ and an ending vertex $v_e$, we compute the score for every path
from $v_s$ to $v_e$ and then find the path with the minimum score. Let
the maximum out-degree of $G$ is $\lambda$, i.e.,
$\lambda=max\{d^+(v)|v\in V\}$, where $d^+(v)$ is out-degree of $v$.
The search space is $O(\lambda^{|V|})$ for enumeration, which is obviously infeasible in real applications. Another alternative approach is to
pre-compute the optimal path for every pair of vertices in $G$. The
critical shortcoming is that cannot cope with
distinct score functions. Since the score functions are various, an
optimal path under one function may be not an optimal path under another function.

There are only a small number of heuristic algorithms are proposed to solve it\cite{DBLP:conf/cikm/YangYGL12}. In this paper, we develop a novel partition-based index to find the optimal path in multi-cost networks and it can support well for Dijkstra-based algorithms under linear functions or heuristic algorithms under non-linear functions. 



%
%
\section{Partition-Based Index} \label{main-index}

\subsection{What is the Partition-Based Index?}\label{subsec-whatk}

Given a graph $G(V,E)$, a $k$-partition of $G$ is a collection
$\{V_1,\cdots,V_k\}$ satisfying the following conditions: (1) every $V_p$ is a subset of $V$;
(2) for $\forall V_p, V_q$ $(p \neq q)$, $V_p\cap V_q=\emptyset$; (2)$V=\bigcup_{1\leq p \leq
k}V_p$. A vertex $v_i$ is called an \emph{\textbf{entry (or exit)}} of $V_p$, if (1) $v_i\in
V_p$; and (2) $\exists v_j$, $v_j\notin V_p\wedge v_j\in N^{-}(v_i)~(\text{or}~v_j\in N^{+}(v_i))$, where $N^{-}(v_i)$ and $N^{+}(v_i)$ are $v_i$'s incoming and
outgoing neighbor set respectively. Entries and exits are also called the \emph{\textbf{border vertices}}. We use $V_{p}.entry$ and $V_p.exit$ to denote the entry set
and exit set of $V_p$, and use $V.entry$ and $V.exit$ to denote the sets of all entries and exits in $G$, respectively. Obviously, $V.entry = \bigcup_{1\leq
p\leq k}V_p.entry$ and $V.exit = \bigcup_{1\leq p\leq k}V_p.exit$.


A partition-based index includes two parts: \emph{\textbf{inter-index}} and
\emph{\textbf{inner-index}}. We first introduce the \textit{lower bound of optimal path} (\lbop) and \textit{skyline path}.

For a multi-cost network $G$ with $d$ kinds of cost, $\mathcal{G}_x$~$(1\leq x \leq d)$ is a weighted graph with the same structure as $G$, and the weight of every edge $(v_i,v_j)$ in $\mathcal{G}_x$ is the $x$-th cost $w_x(v_i,v_j)$ of $w(v_i,v_j)$. For any two vertices $v_i,v_j \in G$, $\mathcal{P}_{i,j}=\{p^1_{i,j},\cdots,p^d_{i,j}\}$
is \textit{the set of single-one cost shortest paths} from $v_i$ to $v_j$, where $p^x_{i,j}$ is the shortest path from $v_i$ to $v_j$ in $\mathcal{G}_x$. We use $\phi^x_{i,j}$ to denote the weight of $p^x_{i,j}$. The cost vector
$\Phi_{i,j}=(\phi^1_{i,j},\cdots,\phi^d_{i,j})$ is called the
\textit{\textbf{lower bound of the optimal path}} (\lbop) from $v_i$ to $v_j$ in $G$.

Let $p$ and $p'$ be two different paths in a multi-cost graph $G$. We say $p$
dominate $p'$, denoted as $p\prec p'$, iff for $\forall i$ $(1\leq
i\leq d)$, $w_i(p)\leq w_i(p')$, and $\exists i$ $(1\leq i\leq d)$,
$w_i(p)<w_i(p')$. Here, $w_i(p)$ and $w_i(p')$ are the $i$-th cost
value of $\cost(p)$ and $\cost(p')$, respectively. For two vertices $v_i, v_j \in G$, a path $p$ is a \textit{\textbf{skyline path}} from $v_i$
to $v_j$ iff $p$ cannot be dominated by any other path $p'$ from $v_i$ to $v_j$.

For any path $p_{i,j}$ from $v_i$ to $v_j$, the cost vector of $p_{i,j}$ is
$\cost(p_{i,j})=(w_1(p_{i,j}),\cdots,w_d(p_{i,j}))$, then we have
$\Phi_{i,j}\preccurlyeq p_{i,j}$, i.e., for $\forall x$ $(1\leq x\leq d)$,
$\phi^x_{i,j} \leq w_x(p_{i,j})$.

Lemma \ref{lemma3} guarantees that $\Phi_{i,j}$ is the strict lower bound for
the optimal path from $v_i$ to $v_j$ in the multi-cost network $G$.  

\begin{lemma}\label{lemma3}
$\Phi_{i,j}$ is the strict lower bound for the optimal path from $v_i$
to $v_j$ in $G$, that is, there does not exist another lower bound
$\Phi'_{i,j}$ such that $\Phi_{i,j}\prec \Phi'_{i,j}$ and
$\Phi'_{i,j}\preccurlyeq p_{i,j}$ for any path $p_{i,j}$ from $v_i$ to $v_j$.
\end{lemma}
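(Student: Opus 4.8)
The plan is to argue by contradiction, exploiting the fact that each coordinate $\phi^x_{i,j}$ of $\Phi_{i,j}$ is not merely a bound but is actually \emph{attained} by a concrete path, namely the single-one cost shortest path $p^x_{i,j}$ in $\mathcal{G}_x$. So the content of the lemma is a tightness (maximality) claim: $\Phi_{i,j}$ already pushes every coordinate of the lower bound as high as it can go.

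First I would suppose, for contradiction, that some vector $\Phi'_{i,j}$ is a lower bound that strictly dominates $\Phi_{i,j}$; that is, (a) $\Phi'_{i,j} \preccurlyeq p_{i,j}$ for every path $p_{i,j}$ from $v_i$ to $v_j$, and (b) $\Phi_{i,j} \prec \Phi'_{i,j}$. By the definition of $\prec$, condition (b) supplies a coordinate $x_0$ with $\phi^{x_0}_{i,j} < {\phi'}^{x_0}_{i,j}$ (while ${\phi}^{x}_{i,j} \leq {\phi'}^{x}_{i,j}$ for all $x$). Next I would instantiate the universally quantified condition (a) at the single witness path $p^{x_0}_{i,j}$, the shortest path in $\mathcal{G}_{x_0}$. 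By the very definition of $\phi^{x_0}_{i,j}$ we have $w_{x_0}(p^{x_0}_{i,j}) = \phi^{x_0}_{i,j}$, and (a) then forces ${\phi'}^{x_0}_{i,j} \leq w_{x_0}(p^{x_0}_{i,j}) = \phi^{x_0}_{i,j}$. This directly contradicts $\phi^{x_0}_{i,j} < {\phi'}^{x_0}_{i,j}$, so no such $\Phi'_{i,j}$ can exist, which is exactly the claim.

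The proof is short because the real work has already been done in the paragraph preceding the lemma, where $\Phi_{i,j} \preccurlyeq p_{i,j}$ is observed for every path; the only extra ingredient here is the equality $w_{x_0}(p^{x_0}_{i,j}) = \phi^{x_0}_{i,j}$, which makes the bound tight in each coordinate \emph{separately}. There is no genuine combinatorial obstacle in this argument. The main thing to be careful about is the logic of the quantifiers: being a valid lower bound requires dominating \emph{all} paths, so it suffices to exhibit a \emph{single} path — the coordinate-$x_0$ shortest path — that refutes the strengthened vector $\Phi'_{i,j}$. The only pitfall to avoid is accidentally re-proving the weaker, already-established fact that $\Phi_{i,j}$ is \emph{a} lower bound, instead of the intended statement that it is the \emph{maximal} one.
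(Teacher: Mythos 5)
Your proof is correct and takes essentially the same route as the paper's: argue by contradiction, extract from $\Phi_{i,j}\prec \Phi'_{i,j}$ a coordinate $x_0$ with $\phi^{x_0}_{i,j} < \phi'^{x_0}_{i,j}$, then instantiate the lower-bound property of $\Phi'_{i,j}$ at the witness path $p^{x_0}_{i,j}$ to force $\phi'^{x_0}_{i,j}\leq \phi^{x_0}_{i,j}$, a contradiction. The only (cosmetic) difference is that you spell out the tightness equality $w_{x_0}(p^{x_0}_{i,j})=\phi^{x_0}_{i,j}$, which the paper uses implicitly.
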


\begin{proof}
We prove it by contradiction. Assume that there is
$\Phi'_{i,j}$ satisfying $\Phi_{i,j}\prec \Phi'_{i,j}$, then
$\exists x$ $(1\leq x \leq d)$, such that
$\phi'^x_{i,j}>\phi^x_{i,j}$. On the other hand, because
$p^x_{i,j}$ is a path from $v_i$ to $v_j$ and then $\Phi'_{i,j}\preccurlyeq
p^x_{i,j}$. It means $\phi'^x_{i,j}\leq \phi^x_{i,j}$, which is a contradiction. \eop
\end{proof}


{\bf Inter-index:} Inter-index is essentially a matrix $A$ to maintain the \lbop for every pair of
border vertex and entry in $G$. Each row represents a border vertex (entry or exit) $v_i$ and each column
represents an entry $v_j$ in $G$. The size of $A$ is  $(|V.exit|+|V.entry|)\times |V.entry|$. Each cell $A_{i,j}$ includes two
elements: $\Phi_{i,j}$ and $\mathcal{P}_{i,j}$.


{\bf Inner-index:} Inner-index consists of $k$ sub-indexs and every
sub-index $I_p$ is associated with a vertex subset $V_p$. $I_p$
includes two parts: (i) \emph{Skyline-Path-Inner-Index} $I_{p}^{S}$;
and (ii) \emph{\lbop-Inner-Index} $I_{p}^{L}$.

Skyline-Path-Inner-Index $I_{p}^{S}$ of $V_p$ is a collection of skyline path sets for all pairs of entry and exit in $V_{p}$, i.e.,
$I_{p}^{S}=\{\skyp_{(i,j);p}|v_i\in V_p.entry, v_j\in V_p.exit\}$.
$\skyp_{(i,j);p}$ is the set of all skyline paths from $v_i$ to $v_j$ in $G_p$,
where $G_p$ is the induced subgraph of $V_p$ on $G$. Note that the paths in $\skyp_{(i,j);p}$ only pass through the
vertices in $V_p$.

\lbop-Inner-Index $I_{p}^{L}$ of $V_p$ is essentially a matrix $M_{p}$ of size $|V_{p}|\times |V_{p}|$ to maintain $\lbop$s for all pairs of vertices $v_i$ and $v_j$ $V_{p}$. 
Actually, we only need to maintain a smaller matrix $M'_{p}$ as $I_{p}^{L}$ in memory. $M'_{p}$ is a sub-matrix of $M_{p}$. It maintain all the $\lbop$s from an entry to a vertex in $V_p$ and all the $\lbop$s from a vertex to an exit in $V_p$. The remaining sub-matrix $M_p^{-}=M_{p}\setminus M'_p~(1\leq p \leq k)$ is maintained in the disk. $M_s^{-}$ and $M_e^{-}$ are taken into the memory when the starting vertex $v_s$ and the ending vertex $v_e$ are given.


By inter-index and \lbop-inner-index, $\Phi_{i,j}$ can be calculated easily
for any pair of vertices $v_i$ and $v_j$ in $G$. Given a starting vertex $v_s$
and an ending vertex $v_e$, we use $V_s$ and $V_e$ to denote the vertex subsets including
$v_s$ and $v_e$ respectively. If $V_s = V_e$, we can
obtain $\Phi_{s,e}$ from \lbop-inner-index $I_{p}^{L}$ directly.
If $V_s \neq V_e$, we calculate $\Phi_{s,e}$ by Lemma \ref{lemma6}.

\begin{lemma}\label{lemma6}
Given two vertices $v_s$ and $v_e$ in a multi-cost network $G$, $V_s$
and $V_e$ are two distinct vertex subsets including $v_s$ and $v_e$ respectively. Let $v_i$ be an entry of $V_e$. Thus for $\forall x$ $(1\leq x \leq d)$, we have
$\phi_{s,e}^x=\min\{\phi_{s,i}^x+\phi_{i,e}^x|v_i\in V_e.entry\}$, where $\phi_{s,e}^x$, $\phi_{s,i}^x$ and $\phi_{i,e}^x$ are the $x$-th cost of \lbop $\Phi_{s,e}$, $\Phi_{s,i}$ and $\Phi_{i,e}$ respectively.
%
%
\end{lemma}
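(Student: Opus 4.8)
The plan is to prove the identity one coordinate at a time, reducing it to the optimal-substructure property of ordinary single-cost shortest paths. Fix a cost dimension $x$ with $1\leq x \leq d$ and work entirely inside the weighted graph $\mathcal{G}_x$, whose edge weights $w_x(\cdot)$ are nonnegative by the standing assumption. Recall that $\phi^x_{s,e}$, $\phi^x_{s,i}$ and $\phi^x_{i,e}$ are nothing but the usual shortest-path distances in $\mathcal{G}_x$ between the respective pairs of vertices. Since the vector identity in the statement is to be read componentwise, it is enough to establish it for this arbitrary fixed $x$, and I would do so by proving the two inequalities separately.

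For the $\leq$ direction I would use concatenation. For every entry $v_i\in V_e.entry$, juxtaposing the shortest path $p^x_{s,i}$ (of weight $\phi^x_{s,i}$) with the shortest path $p^x_{i,e}$ (of weight $\phi^x_{i,e}$) yields a walk from $v_s$ to $v_e$ of total weight $\phi^x_{s,i}+\phi^x_{i,e}$; because the weights are nonnegative, this walk contains a path from $v_s$ to $v_e$ of no larger weight, so $\phi^x_{s,e}\leq \phi^x_{s,i}+\phi^x_{i,e}$. Taking the minimum over all entries of $V_e$ gives $\phi^x_{s,e}\leq \min\{\phi^x_{s,i}+\phi^x_{i,e}\mid v_i\in V_e.entry\}$. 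For the $\geq$ direction I would start from an actual shortest path $P$ from $v_s$ to $v_e$ in $\mathcal{G}_x$, so its weight equals $\phi^x_{s,e}$. The crucial structural observation is that $P$ must enter the block $V_e$: since $V_s\neq V_e$ and the parts of the partition are pairwise disjoint, $v_s\notin V_e$ while $v_e\in V_e$. Let $v_i$ be the first vertex of $P$ lying in $V_e$; its predecessor on $P$ lies outside $V_e$, so $v_i$ has an incoming neighbor not in $V_e$, i.e. $v_i\in V_e.entry$ by the definition of an entry. Splitting $P$ at $v_i$ decomposes it into a prefix from $v_s$ to $v_i$ and a suffix from $v_i$ to $v_e$, whose weights are at least $\phi^x_{s,i}$ and $\phi^x_{i,e}$ respectively; their sum equals $\phi^x_{s,e}$, so $\phi^x_{s,e}\geq \phi^x_{s,i}+\phi^x_{i,e}\geq \min\{\phi^x_{s,j}+\phi^x_{j,e}\mid v_j\in V_e.entry\}$. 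Combining the two inequalities gives equality for coordinate $x$, and since $x$ was arbitrary the lemma follows.

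The two inequalities themselves are routine once the path is split, so I do not expect difficulty there. The one step that deserves care — and the only place where the hypothesis $V_s\neq V_e$ is genuinely used — is the claim that every $v_s$–$v_e$ path must cross into $V_e$ through an entry vertex. I expect this to be the main point to get right, and it rests solely on the definitions of an entry and of a graph partition; the score function and the multi-cost (vector) structure play no role whatsoever, since the whole argument takes place inside a single scalar-weighted graph $\mathcal{G}_x$.
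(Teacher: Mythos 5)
Your proof is correct and follows essentially the same route as the paper's: work coordinate-wise in the scalar-weighted graph $\mathcal{G}_x$, split the shortest $v_s$--$v_e$ path at an entry of $V_e$, and combine subpath optimality with the concatenation bound to get the two inequalities. The only difference is one of rigor, not of approach: you explicitly justify the claim that the path must first cross into $V_e$ at an entry vertex (using $V_s\neq V_e$ and disjointness of the partition), a step the paper simply asserts, and you take the minimum over all entries on both sides rather than first fixing the particular entry on the shortest path.
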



\begin{proof}
We know $\phi_{(s,e);x}$ $(1\leq x\leq
d)$ is the weight of the shortest path $p_{s,e}^x$ in
graph $\mathcal{G}_x$, which must pass
through an entry $v_i$ in $V_e.entry$. Therefore, $p_{s,e}^x$ can be regarded as
two parts: (i) sub-path from $v_s$ to $v_i$; and (ii) sub-path
from $v_i$ to $v_e$. Because $\phi_{(s,i);x}$ and $\phi_{(i,e);x}$ are the weights
of the shortest paths from $v_s$ to $v_i$ and from $v_i$ to $v_e$ respectively
in $\mathcal{G}_x$, then we have $\phi_{(s,i);x}+\phi_{(i,e);x}\leq
\phi_{(s,e);x}$. On the other hand, $\phi_{(s,e);x}$ is the minimum
among all the paths from $v_s$ to $v_e$, then $\phi_{(s,e);x} \leq
\phi_{(s,i);x}+\phi_{(i,e);x}$. Thus we have
$\phi_{(s,e);x}=\phi_{(s,i);x}+\phi_{(i,e);x}$. Next, we prove that $v_i$
is exactly the entry minimizing $\phi_{(s,i);x}+\phi_{(i,e);x}$. It
is obvious otherwise $p_{s,e}^x$ is not the single-one
cost shortest path in $\mathcal{G}_x$. Then we have
$\phi_{(s,e);x}=\min\{\phi_{(s,i);x}+\phi_{(i,e);x}|v_i\in
V_e.entry\}$. \eop
\end{proof}

$\Phi_{s,e}$ can be calculated in two cases: (1) $v_s \in V_s.entry \cup
V_s.exit$; and (2) $v_s \notin V_s.entry \cup V_s.exit$. For case
(1), $\phi_{s,i}^x$ and $\phi_{s,i}^x$ can be directly retrieved from inter-index and \lbop-inner-index $I_e^L$ respectively. Therefore, the minimum
value of $\phi_{(s,i);x}+\phi_{(i,e);x}$ can be easily calculated as $\phi_{s,e}^x$ by Lemma \ref{lemma6}. For case (2), because $\phi_{s,i}^x$ is not maintained in 
inter-index, it is necessary to calculate the minimum value of $\phi_{s,j}^x+\phi_{j,i}^x|v_j\in V_s.exit\}$ as $\phi_{s,i}^x$ and then calculate $\phi_{s,e}^x$ in the similar way as the case (1).
%
%
The algorithm to compute $\Phi_{s,e}$ for any two vertices $v_s$ and
$v_e$ in $G$ is shown in Algorithm \ref{alg2}. The set
$\mathcal{P}_{s,e}$ of the single-one cost shortest paths can be calculated
in the similar way as calculating $\Phi_{s,e}$.

\begin{algorithm}[t]
  \caption{\clbop($I,s,t$)}
  \label{alg2}

{\small
\begin{tabbing}
{\bf Input:} \hspace{0.2cm}\= index $I$, starting vertex $v_s$ and ending vertex $v_e$\\
{\bf Output:} \> \lbop $\Phi_{s,e}$ from $v_s$ to $v_e$.
\end{tabbing}

\begin{algorithmic}[1]
 \IF {$V_s = V_e$}
    \STATE {\bf return} $\Phi_{s,e}$ from $I_{s}^{L}$(or $(I_e^{L})$);
 \ELSE
    \IF {$v_s \in V_s.entry \cup V_s.exit$}
       \STATE \proc$(v_s,v_e,V_e.entry)$;
    \ELSE
       \FOR {$v_i \in V_e.entry$}
          \STATE \proc$(v_s,v_i,V_s.exit)$;
       \ENDFOR
       \STATE \proc$(v_s,v_e,V_e.entry)$;
    \ENDIF
    \STATE {\bf return} $\Phi_{s,e}$;
 \ENDIF
 \end{algorithmic}
}
\end{algorithm}

\begin{algorithm}[t]
  \caption{\proc($v_i,v_j,V$)} \label{alg3}
{\small
\begin{algorithmic}[1]
\FOR {$x=1$ to $d$}
          \FOR {each $v_r \in V$}
             \STATE $\phi^{*}\leftarrow \phi_{(i,r);x}+\phi_{(r,j);x}$;
             \IF {$\phi_{(i,j);x}>\phi^{*}$}
                \STATE  $\phi_{(i,j);x} \leftarrow \phi^{*}$;
             \ENDIF
          \ENDFOR
    \ENDFOR
 \end{algorithmic}
}
\end{algorithm}

\subsection{How to Construct Partition-Based Index?} \label{subsec-sky}

\subsubsection{Inter-index and \lbop-inner-index}

\comment{ As we discussed above, $k$-cluster index includes
inter-index and inner-index and inner-index $I_j$ for each cluster
$V_i$ also includes two components: \lbop inner-index $I_j^{L}$ and
skyline path inner-index $I_{j}^{S}$.}

For \lbop-inner-index $I_p^{L}$ of vertex subset $V_p$, the shortest path algorithms can be used to calculate $\Phi_{i,j}$ for every pair of vertex $v_i$ and $v_j$ in $V_p$. For inter-index, $\Phi_{i,j}$ for every pair of border vertex $v_i \in V.entry \cup V.exit$ and entry $v_j\in V.entry$ also can be calculated by the shortest path algorithms.
It worth noting that it is not necessary to maintain $\Phi_{i,j}$ in inter-index if $v_i$ and $v_j$ are in the same vertex subset $V_p$ because it has been maintained in the \lbop-inner-index.


\subsubsection{Skyline-path-inner-index}

For every $I_p^S$ in Skyline-path-inner-index, $I_{p}^{S}=\{\skyp_{(i,j);p}|v_i\in V_p.entry, v_j\in
V_p.exit\}$, it is necessary to calculate $\skyp_{(i,j);p}$ for every pair of entry $v_i$ and exit $v_j$ in $V_p$. 
We use the heuristic algorithm proposed in \cite{DBLP:conf/cikm/YangYGL12} to calculate $\skyp_{(i,j);p}$. All possible skyline paths in $G_p$ are organized in a search tree $T$ and a prior queue $Q$ is used to maintain the paths in $T$ to be searched, where $G_p$ is the induced subgraph of $V_p$ on $G$. In each iteration, a path $p$ is dequeued from $Q$. When the ending vertex of $p$ is not $v_j$, algorithm need to check whether $p$ can be dominated by a path in $\skyp_{(i,j);p}$. If not, $p$ is extended to a new path $p'$ by appending an outgoing neighbor $v_o$ of ending vertex in $p$ and then $p'$ is inserted into $Q$. When the ending vertex of $p$ is $v_j$. If $p$ cannot be dominated by any path in $\skyp_{(i,j);p}$, $p$ will be inserted into $\skyp_{(i,j);p}$. On the other hand, the paths dominated by $p$ will be removed from $\skyp_{(i,j);p}$. The several pruning strategies can be used for this algorithm and the more details are shown in \cite{DBLP:conf/cikm/YangYGL12}.



\comment{
{\bf Skyline path based pruning}: Similar to Algorithm \ref{alg1},
we maintain a set $\skyp_x(u,w)$ for each $w\in V_p$ in the
searching process. For a node $C$ whose ending vertex is $w$, if
$\exists p\in \skyp_x(u,w)$, $p\prec C$, then the subtree rooted at
$C$ can be pruned safely. Otherwise, we insert $C$ into
$\skyp_x(u,w)$ and remove $p$ from $\skyp_x(u,w)$ if
$C\prec p$. The following Lemma guarantees the correctness of this
pruning rule.

\begin{lemma}\label{lemma7}
Let $C$ and $C'$ be two different nodes in the search tree and both
of their ending vertices are $w$. If $C'\prec C$, then there does not exist a
skyline path from $u$ to $v$ in the subtree rooted at $C$.
\end{lemma}

The proof is similar to Lemma \ref{lemma2}.

{\bf \lbop based pruning}: For a node $C$ whose ending vertex is $w$,
$\Phi_{w,v}$ is the \lbop from $w$ to $v$. A lower bound
$LB(C)$ for $C$ can be estimated as $LB(C)=\cost(C)+\Phi_{w,v}$. If $\exists p \in
\skyp_x(u,v)$ and $p \prec LB(C)$, then the subtree rooted at $C$ can
be pruned safely. Lemma \ref{lemma8} guarantees the correctness of
this pruning rule.

\begin{lemma}\label{lemma8}
Let $C$ be a node whose ending vertex is $w$ in the search tree.
$LB(C)=\cost(C)+\Phi_{w,v}$. Let $\widetilde{C}$ be a path from $u$
to $v$ in the subtree rooted at $C$. If $\exists p \in
\skyp_x(u,v)$, $p \prec LB(C)$, then $p \prec \widetilde{C}$.
\end{lemma}

The proof is similar to Lemma \ref{lemma4}.

We initialize $\skyp_x(u,v)=\{\mathcal{P}_{x;(u,v);1},\cdots,
\mathcal{P}_{x;(u,v);d}\}$. $\mathcal{P}_{x;(u,v);i}$ is a path that has the minimum cost value on $i$-th cost type among all
paths from $u$ to $v$ in $G_p$, thus $\mathcal{P}_{x;(u,v);i}$ is a
skyline path from $u$ to $v$ in $G_p$.
}

\subsection{Contour skyline set}

Given a skyline-path-inner-index $I_p^{S}$, each skyline path $p\in \skyp_{(i,j);p}$ can be regarded as a skyline point $p$ in the
$d$-dimensional space according to $\cost(p)$. Note that some such points in the space are proximity. This property is helpful for improve
the efficiency of the optimal path query. In this section, we propose the
definition of the contour skyline set. All skyline points in $\skyp_{(i,j);p}$ can
be partitioned into several groups by their space proximity.
We compute a \textit{\textbf{contour skyline point}} for every group and the set of the
contour skyline points is called the \textit{\textbf{contour skyline set}} of
$\skyp_{(i,j);p}$.

Fig.~\ref{fig2} is an example of the contour skyline set in the cluster $V_p$. $p_1,\cdots,p_9$ are the skyline points
in a 2-dimensional space and each $p_i$ is a skyline path $p_i$. We observe that $R_1=\{p_1,p_2,p_3\}$,
$R_2=\{p_4,p_5,p_6,p_7\}$ and $R_3=\{p_8,p_9\}$ are three groups
such that the skyline points in the same group are space proximity.
Then $cp_1$, $cp_2$ and $cp_3$ are the contour skyline points
corresponding to $R_1$, $R_2$ and $R_3$ respectively. Let
$\cost(cp_i)=(w_1(cp_i),w_2(cp_i))$ be the cost vector of $cp_i$. It
is obvious that $cp_i$ is the \lbop of the skyline paths in $R_i$,
i.e., $w_x(cp_i)=\min\{w_x(p)|p\in R_i\}$, where $w_x(cp_i)$ and
$w_x(p)$ are the $x$-th cost value of $\cost(cp_i)$ and $\cost(p)$ respectively.
Therefore, the problem to compute the contour skyline points is equivalent to 
partition the skyline points into several different groups such that the points in each group are more space
proximity. Given a specified $r$, our goal is to partition the
skyline points into $r$ groups. To do that, we introduce the concept
of the diameter for such group. For a group $R_i$, the diameter of
$R_i$, denoted as $\mathcal{D}(R_i)$, is defined as the maximum
Euclidean distance among all the pairs of the points in $S$. Formally,
\begin{equation}\label{eq1}
\mathcal{D}(R_i)=max\{\dist(p,p')|p_i,p_j\in R_i\}
\end{equation}
where, $\dist(p,p')$ is the Euclidean distance between $p$ and
$p'$ in the multi-dimensional space. Given a $r$-partition
$\mathcal{R}=\{R_1,\cdots,R_r\}$, we define the diameter
$\mathcal{D}(\mathcal{R})$ of $\mathcal{R}$ below:
\begin{equation}\label{eq2}
\mathcal{D}(\mathcal{R})=max\{\mathcal{D}(R_i)|R_i \in \mathcal{R}\}
\end{equation}
Intuitively, $\mathcal{D}(\mathcal{R})$ quantifies the partition
quality as the maximum distance between any two points in the same
group. A partition $\mathcal{R}$ is good if, for every two points in
the same group, they are close to each other.


\begin{figure} \label{fig2}
\centering
\includegraphics[width=1.2in]{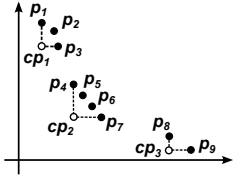}
\caption{An example of contour skyline set}
\label{fig2}
\end{figure}

\begin{definition}
({\bf Contour skyline}) Given two vertices $v_x$ and $v_y$ in vertex subset $V_p$, $\skyp_{(x,y);p}$ is the skyline path set
from $v_x$ to $v_y$ in the induced subgraph $G_p$, every path in
$\skyp_{(x,y);p}$ is a skyline point in $d$-dimensional space. Given an
integer $r$, an optimal $r$-partition $\mathcal{R}_{opt}$ is a
partition to minimize $\mathcal{D}(\mathcal{R})$. For every group $R_i$ in $\mathcal{R}_{opt}$, the
\emph{contour skyline point} $cp_i$ is the \lbop of the skyline
paths in $R_i$, the set of all $cp_i$ is called the
\emph{contour skyline set} of $\skyp_{(x,y);p}$, denoted as $\cs_{(x,y);p}$.
\end{definition}

The efficiency of the optimal path query can be improved by $\cs_{(x,y);p}$. We introduce it in Section \ref{subsec-query}. Next, we discuss how to compute the contour skyline points. This problem is to find the
optimal partition $\mathcal{R}_{opt}$ for all the skyline points in
$\skyp_{(x,y);p}$. In case of 2D space, we propose a dynamic
programming method to compute the optimal partition $\skyp_{(x,y);p}$. We prove this problem is NP-hard
in 3D or higher dimensional space. We give a 2-approximate algorithm and show there
is no $(2-\epsilon)$-approximate solution in the polynomial time.

\stitle{Case 1: (2D space)}: Assume that $\skyp_{(x,y);p}$ has been
already computed and let $m$ be the size of $\skyp_{(x,y);p}$. We use
$S=\{p_1,\cdots,p_m\}$ to denote the set of all skyline points in
$\skyp_{(x,y);p}$, where all $p_i$ in $S$ are sorted in ascending order
of their $x$-coordinates. We use $S_i$ to denote $\{p_1,
p_2,\cdots,p_i\}$. Specially, $S_0=\emptyset$. We also use a
notation $opt(i,t)$ to denote the optimal $t$-partition for $S_i$.
Obviously, the optimal $r$-partition $\mathcal{R}_{opt}$ for $S$ is
essentially $opt(m,r)$. Let $S_{j,i}$ be the point set $\{p_j,
\cdots, p_i\}$, where $0\leq j\leq i\leq m$. Then we have the
following recursive equation:
\begin{equation}\label{eq3}
\mathcal{D}(opt(i,t))=
\min\limits_{j=t-1}\limits^{i}\{\max\{\mathcal{D}(opt(j-1,t-1)),\mathcal{D}(S_{j,i})\}\}
\end{equation}
The meaning of Eq.~(\ref{eq3}) is that: without loss generality,
assume that the optimal $t$-partition of $S_i$ is $\{R_1,\cdots,R_t\}$,
where $R_t$ is the last group which consists of
$\{p_j,\cdots,p_i\}$. Then, $\{R_1,\cdots,R_{t-1}\}$ must be the
optimal $(t-1)$-partition for $S_{j-1}$. Let $j_{\min}$ be the value of $j$ minimizing Eq.~(\ref{eq3}), then we have
\begin{equation}\label{eq4}
\begin{aligned}
opt(i,t)=opt(j_{\min}-&1,t-1)\cup S_{j_{\min},i}\\
opt(i,1)&=S_i
\end{aligned}
\end{equation}

By Eq.~(\ref{eq3}) and Eq.~(\ref{eq4}), a dynamic
programming method can be utilized to compute the optimal $r$-partition for
$\skyp_{(x,y);p}$ in 2D space.

%

\comment{
\begin{algorithm}[t]
  \caption{\computecstwo($\skyp_x(u,v),r$)}
  \label{alg-dynamic}

{\small
\begin{tabbing}
{\bf Input:} \hspace{0.2cm}\= $\skyp_x(u,v)=\{p_1,\cdots,p_m\}$ and parameter $r$\\
{\bf Output:} \> Contour skyline set $\cs_x(u,v)$.
\end{tabbing}

\begin{algorithmic}[1]
 \FOR {$i =1$ to $m$}
   \STATE $\mathcal{D}(opt(i,1)) \leftarrow dist(p_1,p_i)$;
   \STATE $opt(i,1) \leftarrow S_i$
 \ENDFOR 
 \FOR {$t=2$ to $r$}
   \FOR {$i =1$ to $m$}
     \STATE $\mathcal{D}(opt(i,t)) \leftarrow \min\limits_{j=t-1}\limits^{i}\{\max\{\mathcal{D}(opt(j-1,t-1)),\mathcal{D}(S(j,i))\}\}$;
     \STATE $opt(i,t) \leftarrow opt(j_{\min}-1,t-1)\cup S(j,i)$;
   \ENDFOR
 \ENDFOR
 \STATE $\mathcal{R}_{opt} \leftarrow opt(m,r)$;
 \FOR {each $R_i \in \mathcal{R}_{opt}$}
    \STATE $cp_i \leftarrow \Phi(R_i)$; $\cs_x(u,v) \leftarrow \cs_x(u,v) \cup \{cp_i\}$;
  \ENDFOR 
 \STATE {\bf return} $\cs_x(u,v)$;
 \end{algorithmic}
}
\end{algorithm}
}

\stitle{Case2: (3D and the higher dimensional space)}: In 3D and the higher dimensional space , we prove the optimal
$r$-partition problem is NP-hard by reducing the $r$-split problem in 2D space, which is NP-hard, to this problem. Given a set of
points $\{p_1,\cdots,p_n\}$ in 2D space, the $r$-split problem is to
find a set of $r$ groups $\{B_1,\cdots,B_r\}$ that minimizes
\begin{equation}\label{eq5}
\max\limits_{1\leq x \leq r}\{\max\{dist(p_i,p_j)|p_i,p_j\in B_x\}\}
\end{equation}
This problem is similar to the $r$-partition problem for the skyline points,
but when the points in space are the skyline points, the complexity for the
$r$-split problem is unknown. We give Lemma \ref{lemma9} as follows:

\begin{lemma}\label{lemma9}
For dimensionality $d\geq 3$, the $r$-partition problem is NP-hard.
\end{lemma}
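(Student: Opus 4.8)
The plan is to establish NP-hardness by a polynomial-time reduction from the 2D $r$-split problem, which is NP-hard, to the $r$-partition problem on skyline points in three dimensions. The difficulty already flagged in the text is that the $r$-partition instances we are allowed to produce must consist of genuine skyline points (a mutually non-dominating set under componentwise dominance), whereas an arbitrary 2D input to $r$-split is an unconstrained point cloud. So the whole reduction hinges on one gadget: an isometric embedding of the plane into $\mathbb{R}^3$ whose image is automatically an antichain, i.e., a set of skyline points.

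First I would fix such an embedding. Take the hyperplane $\Pi_C=\{(a,b,c): a+b+c=C\}$ together with the two orthonormal in-plane vectors $\mathbf{e}_1=\frac{1}{\sqrt{2}}(1,-1,0)$ and $\mathbf{e}_2=\frac{1}{\sqrt{6}}(1,1,-2)$, both of which satisfy $a+b+c=0$. Given a 2D instance $\{p_1,\dots,p_n\}$ with $p_i=(x_i,y_i)$, map each point to $P_i=\mathbf{o}+x_i\mathbf{e}_1+y_i\mathbf{e}_2$, where $\mathbf{o}=(C/3,C/3,C/3)$ and $C$ is chosen large enough that every coordinate of every $P_i$ is nonnegative (consistent with the cost-vector semantics of skyline points). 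Two properties then have to be checked, both routine. First, since $\{\mathbf{e}_1,\mathbf{e}_2\}$ is orthonormal, the map is a rigid motion onto $\Pi_C$, so $\dist(P_i,P_j)=\dist(p_i,p_j)$ for all $i,j$. Second, every $P_i$ lies on $\Pi_C$, hence for $i\neq j$ the displacement $P_i-P_j$ has coordinate sum $0$; a nonzero vector summing to $0$ can have neither all coordinates $\ge 0$ nor all $\le 0$, so $P_i$ and $P_j$ are incomparable. Therefore $\{P_1,\dots,P_n\}$ is an antichain, i.e., a valid set of skyline points.

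Next I would transfer optimality. Because distances are preserved exactly, the diameter $\mathcal{D}(R)$ of any group is identical whether measured on $\{p_i\}$ or on the corresponding $\{P_i\}$, so $\mathcal{D}(\mathcal{R})$ coincides for corresponding partitions, and the $r$-split objective $\max_x\max\{\dist(p_i,p_j): p_i,p_j\in B_x\}$ is literally the $r$-partition objective $\max_i\mathcal{D}(R_i)$. Consequently, under the bijection $P_i\leftrightarrow p_i$, a partition of $\{P_1,\dots,P_n\}$ minimizing $\mathcal{D}(\mathcal{R})$ is exactly an $r$-split of $\{p_1,\dots,p_n\}$ minimizing the maximum intra-group diameter. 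The construction uses $O(n)$ arithmetic and is clearly polynomial, so solving the $r$-partition problem on 3D skyline points would solve the 2D $r$-split problem, and NP-hardness follows. For $d>3$ I would simply append $d-3$ identical constant coordinates to every $P_i$: this leaves all pairwise distances unchanged and preserves incomparability (equal padded coordinates never break a tie), so the same reduction gives NP-hardness for every $d\ge 3$.

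The step I expect to be the crux is the construction in the second paragraph, not its verification, which is mechanical, but the realization that placing the points on a constant-sum hyperplane is what simultaneously forces the skyline (antichain) property and an isometry. Any embedding that only preserves distances, such as an arbitrary planar copy, risks producing dominated points, and any embedding that only guarantees incomparability, for instance keeping $x_i,y_i$ as the first two coordinates and setting $z_i=C-x_i-y_i$, distorts distances; the orthonormal-basis-within-$\Pi_C$ construction is precisely what reconciles the two requirements.
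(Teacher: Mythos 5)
Your proof is correct and follows essentially the same route as the paper: a polynomial-time reduction from the 2D $r$-split problem via a distance-preserving linear embedding of the plane into $\mathbb{R}^3$ whose image is an antichain (the paper's explicit map is exactly such an isometry onto a plane with all-positive normal, just like your constant-sum hyperplane $\Pi_C$), followed by zero/constant padding for $d>3$. Your version is, if anything, slightly more careful than the paper's, since it handles ties in the incomparability argument and ensures nonnegative coordinates.
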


\begin{proof} Given a set of points $\{p_1,\cdots,p_n\}$ in 2D space,
we map each of them to a skyline point in 3D space. For a point $p_i$
with $x$-coordinate $p_i(x)$ and $y$-coordinate $p_i(y)$, it is mapped to a point $p'_i$ in 3D space with $x$,
$y$ and $z$-coordinates:
$p'_i(x)=-\frac{1}{\sqrt{2}}p_i(x)+\frac{1}{2}p_i(y)$,
$p'_i(y)=\frac{1}{\sqrt{2}}p_i(x)+\frac{1}{2}p_i(y)$, and
$p'_i(z)=-\frac{1}{\sqrt{2}}p_i(y)$. For any two points in 3D space
$p'_1$ and $p'_2$, if $p'_1(x)>p'_2(x)$ and $p'_1(y)>p'_2(y)$, then
$p'_1(z)<p'_2(z)$. It means each point in 3D space is a skyline
point. On the other hand, we also find $dist(p'_1,p'_2)=dist(p_1,p_2)$, where $dist(p_i,p_j)$ is the Euclidean distance between $p_i$ and $p_j$.
This reduction is in the polynomial time. If we can find the optimal
$r$-partition in the polynomial time, then we can solve $r$-split
problem in the polynomial time.

Given a set $S$ of points in 3D space, we can convert it to a $d$-dimensional point set $S'$ for any $d \geq 3$ easily. We assign $(d-3)$ zeros to all the other coordinates for any point in $S$. The optimal $r$-partition for $S'$ is obviously the optimal $r$-partition for $S$ in 3D space. It is in the polynomial time for the reduction from 3D space to the $d$-dimensional space. \eop
\end{proof}

We give a greedy algorithm for $r$-partition on
a given $\skyp_{(x,y);p}$ in a vertex subset $V_p$. The main idea is as follows: In the initialization phase, all the points are assigned to a group $R_1$. One of these points, denoted as $\head_{1}$, is selected as the ``base point'' of $R_1$. The selection of $\head_1$ is
arbitrary. During each iteration, some points in $R_1,\cdots,R_j$
are moved into a new group $R_{j+1}$. Also, one of these points will be
selected as the ``base point'' of the new group, i.e., $\head_{j+1}$. The construction of
the new group is accomplished by first finding a point $p_i$, in
one of the previous $j$ groups $\{R_1,\cdots,R_j\}$, whose distance
to the base point of group it belongs is maximal. Such a point will be
moved into the group $R_{j+1}$ and selected as the ``base point'' of $R_{j+1}$. A point
in any of the previous groups will be moved into group $R_{j+1}$ if
its distance to $p_i$ is not larger than the distance to the base point of
group it belongs to. With the $r$-partition, the $\cs_{(x,y);p}$ of $\skyp_{(x,y);p}$ can be computed easily according to the definition of the contour skyline set. 

\comment{
\begin{algorithm}[t]
  \caption{\computecs($\skyp_x(u,v),r$)}
  \label{alg-greedy}

{\small
\begin{tabbing}
{\bf Input:} \hspace{0.2cm}\= $\skyp_x(u,v)=\{p_1,\cdots,p_m\}$ and parameter $r$\\
{\bf Output:} \> Contour skyline set $\cs_x(u,v)$.
\end{tabbing}

\begin{algorithmic}[1]
 \STATE $R_1 \leftarrow S$; select a point $p \in S$ as  $\head_1$ randomly;
 \FOR {$i =1$ to $k-1$}
   \STATE let $p$ be the point whose distance to the base point of the group it belongs to is maximum in all groups;
   \STATE $R_{i+1} \leftarrow \{p\}$; $\head_{i+1} \leftarrow \{p\}$;
   \FOR {each $q \in \cup_{x=1}^{i}R_x$}
     \STATE let $R_j$ be the group that $p$ belongs to;
     \IF {$dist(q,\head_{i+1})\leq dist(q,\head_j)$}
       \STATE $R_{i+1} \leftarrow R_{i+1} \cup \{q\}$;
     \ENDIF
   \ENDFOR
   \STATE $\mathcal{R}_{opt} \leftarrow \mathcal{R}_{opt} \cup R_{i+1}$; 
 \ENDFOR
 \FOR {each $R_i \in \mathcal{R}_{opt}$}
   \STATE $cp_i \leftarrow \Phi(R_i)$; $\cs_x(u,v) \leftarrow \cs_x(u,v) \cup \{cp_i\}$;
 \ENDFOR 
 \STATE {\bf return} $\cs_x(u,v)$;
 \end{algorithmic}
}
\end{algorithm}
}

This algorithm is guaranteed as a $2$-approximate
solution because there is no $(2-\epsilon)$-approximate solution in the
polynomial time if $P \neq NP$, as analysis in \cite{DBLP:journals/tcs/Gonzalez85}.

In summary, for each $\skyp_{(x,y);p}$ in vertex subset $V_p$, we compute the
contour skyline set $\cs_{(x,y);p}$. We also maintain every
$\cs_{(x,y);p}$ in $I_p^S$.

\subsection{How to Partition Graph to K Vertex Subsets}\label{subsec-partition-k}

For optimal path problem in the multi-cost networks, the less number of edges among different vertex subsets results in the less number of entries and exits in the multi-cost 
network, and then the size of partition-based index becomes smaller. The objective of the partition is to make the edges dense in the same vertex subset and sparse among different vertex subsets. It is an optimal partition problem and has been well studied in the past couple of decades\cite{DBLP:conf/ipps/Abou-RjeiliK06,
DBLP:journals/pami/DhillonGK07, DBLP:conf/kdd/XuYFS07}. In this paper, we use the classic
multi-level graph partitioning algorithm, proposed by Metis et al. in
\cite{DBLP:conf/ipps/Abou-RjeiliK06}, to partition the networks in experiments.

%


%
\section{Query Processing} \label{main-query}
Given a multi-cost network $G(V,E,W)$, a starting vertex $v_s$ and an ending
vertex $v_e$, $V_s$ and $V_e$ are the vertex subsets including
$v_s$ and $v_e$ respectively. A shrunk graph $\bar{G}=(\bar{V},\bar{E})$ can be derived from partition-based index. $\bar{V}$
consists of three sets: (1) $V_s$; (2) $V_e$, and (3)
$\bigcup_{p\neq s,e}(V_{p}.entry \cup V_{p}.exit)$. The edges in $\bar{E}$ satisfy three following conditions: (1) $(v_i,v_j)\in \bar{E}$, iff $((v_i,v_j)\in E)
\wedge ((v_i,v_j\in V_s)\vee (v_i,v_j\in V_e))$; (2) $(v_i,v_j)\in \bar{E}$,
iff $((v_i,v_j)\in E) \wedge ((v_i\in V_p.exit)\wedge (v_j\in V_q.entry))$,
where $V_p\neq V_q$; and (3) $m$ edges $\{(v_i,v_j)^1,\cdots,(v_i,v_j)^m\}$ are constructed for any pair of entry $v_i$ and exit $v_j$ in $V_p$, where $V_p\neq V_s$ and $V_p \neq V_e$. Note that $m$ is the size of $\skyp_{(i,j);p}$. In case (3), every edge
$(v_i,v_j)^\alpha(1\leq \alpha\leq m)$ from $v_i$ to $v_j$ represents a skyline path
in $\skyp_{(i,j);p}$. The following theorem guarantees the optimal path problem on $G(V,E)$ is equivalent to that on  $\bar{G}(\bar{V},\bar{E})$.

\begin{theorem} \label{theorem-2}
Given a multi-cost graph $G(V,E)$, a starting vertex $v_s$ and an ending vertex $v_e$ on $G$, a shrunk graph $\bar{G}(\bar{V},\bar{E})$ regarding $v_s$ and $v_e$ can be constructed. Finding the optimal path from $v_s$ to $v_e$ in $G$ is equivalent to finding the optimal path from $v_s$ to $v_e$ in $\bar{G}$.
\end{theorem}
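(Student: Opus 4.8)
The plan is to prove that the two problems have the same optimal value and that an optimal path for either graph can be transformed into one for the other, by exhibiting a \emph{cost-vector-preserving} correspondence between the $v_s$-$v_e$ paths of the two graphs. Since $f(\cdot)$ depends on a path only through its cost vector $\cost(\cdot)$, any cost-vector-preserving map automatically preserves scores, and hence minimizers; this reduces the theorem to setting up such a map in both directions.

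First I would handle the expansion direction. Take any $v_s$-$v_e$ path $\bar{p}$ in $\bar{G}$; each of its edges is of type (1), (2), or (3). Type-(1) and type-(2) edges are genuine edges of $G$ and are copied verbatim, while a type-(3) edge $(v_i,v_j)^{\alpha}$ is replaced by the skyline path in $\skyp_{(i,j);p}$ that it represents, which lies entirely inside the induced subgraph $G_p$. Because each such super-edge was assigned exactly the cost vector of its skyline path, the resulting path $p$ in $G$ satisfies $\cost(p)=\cost(\bar{p})$, so $f(p)=f(\bar{p})$. For the contraction direction, decompose a $v_s$-$v_e$ path $p$ of $G$ by the partition of each vertex: all vertices and edges inside $V_s$ and $V_e$ survive in $\bar{G}$ as type-(1) edges, every partition-crossing edge survives as type-(2), and each maximal segment of $p$ lying inside an intermediate partition $V_p$ (one with $V_p\neq V_s,V_e$) begins at an entry and ends at an exit of $V_p$ by the definition of border vertices. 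Whenever such a segment is a skyline path of $G_p$, it coincides with a type-(3) super-edge and can be contracted without changing the cost vector, yielding a $\bar{G}$-path $\bar{p}$ with $\cost(\bar{p})=\cost(p)$.

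Next I would tie the two optima together by showing the optimal path $p^*_{s,e}$ of $G$ meets the hypothesis of the contraction step, namely that each of its intermediate-partition segments is a skyline path. If some segment from an entry $v_i$ to an exit $v_j$ of $V_p$ were dominated by another $v_i$-$v_j$ path $p'$ inside $G_p$, then substituting $p'$ for that segment would give a $v_s$-$v_e$ path whose cost vector is $\leq$ in every coordinate and $<$ in at least one; by the monotonicity of $f(\cdot)$ its score would be strictly below $f(p^*_{s,e})$, contradicting optimality. Hence $p^*_{s,e}$ contracts to some $\bar{p}$ with $f(\bar{p})=f(p^*_{s,e})$, giving $\min_{\bar{G}} f \leq \min_{G} f$; combined with the expansion direction, which gives $\min_{G} f \leq \min_{\bar{G}} f$, the two minima coincide, and expanding a $\bar{G}$-optimal path recovers a $G$-optimal path.

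The hard part is precisely this last step. The graph $\bar{G}$ deliberately discards every non-skyline entry-to-exit traversal of the intermediate partitions, so the expansion and contraction maps fail to be a clean bijection on all paths and only agree on the skyline-respecting ones; the whole argument rests on showing the optimum never needs a discarded (dominated) traversal, which is exactly where the skyline definition and the monotonicity of $f(\cdot)$ are used. A secondary point worth checking is that the substitution stays within the class $P_{s,e}$ of simple paths: since the partitions are pairwise vertex-disjoint and the replacement $p'$ remains inside $V_p$, swapping it in cannot create a repeated vertex, so simplicity is preserved.
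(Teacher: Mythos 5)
Your proposal is correct and takes essentially the same route as the paper's proof: both rest on (i) the cost-preserving correspondence that expands each super-edge of $\bar{G}$ into the skyline path it represents, and (ii) the domination-plus-monotonicity argument showing that every intermediate-partition segment of a $G$-optimal path must be a skyline path, so the optimum survives contraction into $\bar{G}$. Your write-up merely reorganizes the paper's double proof-by-contradiction into two explicit cost-preserving maps and a comparison of minima, and is if anything more careful (e.g., about simplicity) than the original.
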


\begin{proof} 
First, we prove that an optimal path $p$ from $v_s$ to $v_e$ in $G$ is also an optimal path in $\bar{G}$. $p$ must be a path from $v_s$ to $v_e$ in $\bar{G}$, otherwise some part of $p$ can be dominated by a skyline path in a cluster. A new path can be constructed by using this skyline path instead of this part in $p$. By the monotonicity of the score function $f(\cdot)$, the score of new path is less than the score of $p$, which is contradict with that $p$ is the optimal path in $G$. Moreover, $p$ must be an optimal path from $v_s$ to $v_e$ in $\bar{G}$, otherwise there must exist another path $p'$ whose score is less that $p$ in $\bar{G}$. Obviously, $p'$ is also a path in $G$, thus it is contradict with that $p$ is the optimal path in $G$.  

Next, we prove that an optimal path $p$ in $\bar{G}$ is also an optimal path in $G$. Assume that there exist another path $p'$ whose score is less than $p$ in $G$, we consider two cases. First, $p'$ is also a path in $\bar{G}$, then $p$ is not the optimal path in $\bar{G}$ because $p'$'s score is less than $p$'s score. Second, $p'$ is not a path in $\bar{G}$, then $p'$ must be dominated by another path $p''$ in $\bar{G}$ and the score of $p''$ is less than the score of $p$ in $\bar{G}$. It is contradict with that $p$ is the optimal path in $\bar{G}$. \eop
\end{proof}

Based on Theorem \ref{theorem-2}, the optimal path from $v_s$ to $v_e$ on $G(V,E)$ is
equivalent to the optimal path on $\bar{G}(\bar{V},\bar{E})$. The
process of finding the optimal path includes two steps: (1) vertex-filtering;
and (2) query processing.

\begin{algorithm}[t]
  \caption{\vf($\bar{G}(\bar{V},\bar{E}),v_s,v_e,f(\cdot)$)}
  \label{alg4}

{\small
\begin{tabbing}
{\bf Input:} \hspace{0.2cm}\= $\bar{G}(\bar{V},\bar{E})$, the score function $f(\cdot)$, the starting vertex $v_s$ \\\> and the ending vertex $v_e$;\\
{\bf Output:} \> the optimal path $p^*_{s,e}$.
\end{tabbing}

\begin{algorithmic}[1]
 \STATE $\tau \leftarrow \min\{f(p_{s,e}^x|p_{s,e}^x\in
\mathcal{P}_{s,e}\}$;
 \FOR {each $v_i \in \bar{V}$}
    \IF {$\tau < f(\Phi_{s,i}+\Phi_{i,e})$}
       \STATE $\bar{V} \leftarrow \bar{V}-\{v_i\}$;
    \ENDIF
 \ENDFOR
 \STATE \findsp($\bar{G}(\bar{V}),v_s,v_e,f(\cdot)$)
 \STATE {\bf return} $p^*_{s,e}$, $\tau$;
 \end{algorithmic}
}
\end{algorithm}

\subsection{Vertex-Filtering} \label{subsec-vertexfilter}
We propose a vertex-filtering algorithm which can effectively filter
vertices from $\bar{G}(\bar{V},\bar{E})$. Given two vertices $v_i$
and $v_j$ in $\bar{G}$, $\Phi_{i,j}$ and $\mathcal{P}_{i,j}$ can be calculated by Algorithm \ref{alg2}. Obviously, $\tau=\min\{f(p_{s,e}^x)|p_{s,e}^x\in
\mathcal{P}_{s,e}\}$ is an upper bound of the score of the
optimal path from $v_s$ to $v_e$. If
$\mathcal{P}_{s,e}=\emptyset$, then there does not exist a path from $v_s$
to $v_e$ and algorithm immediately return $p^*{s,e}=\emptyset$. For any
$v_i$ in $\bar{G}$, if $\tau<f(\Phi_{s,i}+\Phi_{i,e})$, then $v_i$ can
be removed from $\bar{G}$. In the other words, the optimal path from $v_s$
to $v_e$ cannot pass through $v_i$. Theorem \ref{theorem1} guarantees
the correctness of the vertex filtering.

\begin{theorem} \label{theorem1}
Given a multi-cost graph $G(V,E)$, a score function $f(\cdot)$,
a starting vertex $v_s$ and an ending vertex $v_e$, a shrunk graph
$\bar{G}(\bar{V},\bar{E})$ can be constructed. $\mathcal{P}_{s,e}$
is the set of the single-one cost shortest paths from $v_s$ to $v_e$,
$\mathcal{P}_{s,e}\neq \emptyset$. $\tau$ is an upper bound of the
optimal path from $v_s$ to $v_e$,
$\tau=\min\{f(p_{s,e}^x)|p_{s,e}^x\in
\mathcal{P}_{s,e}\}$. For any vertex $v_i$ in $\bar{G}$, if
$\tau<f(\Phi_{s,i}+\Phi_{i,e})$, where $\Phi_{s,i}$ and $\Phi_{i,e}$
are the \lbop from $v_s$ to $v_i$ and the \lbop from $v_i$ to $v_e$
respectively, then the optimal path from $v_s$ to $v_e$ cannot travel
through $v_i$.
\end{theorem}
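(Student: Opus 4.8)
The plan is to argue by contradiction, exploiting the monotonicity of the score function together with the fact (established in Lemma \ref{lemma3}) that $\Phi_{s,i}$ and $\Phi_{i,e}$ are strict lower bounds on the cost vectors of any path through $v_i$. First I would suppose, for contradiction, that the optimal path $p^*_{s,e}$ from $v_s$ to $v_e$ does travel through some vertex $v_i$ satisfying $\tau < f(\Phi_{s,i}+\Phi_{i,e})$. Since $p^*_{s,e}$ passes through $v_i$, it decomposes into two sub-paths: a sub-path $p_1$ from $v_s$ to $v_i$ and a sub-path $p_2$ from $v_i$ to $v_e$, so that $\cost(p^*_{s,e}) = \cost(p_1) + \cost(p_2)$.

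The next step uses the lower-bound property. Because $p_1$ is a path from $v_s$ to $v_i$, we have $\Phi_{s,i} \preccurlyeq \cost(p_1)$, i.e.\ $\phi^x_{s,i} \le w_x(p_1)$ for every coordinate $x$; similarly $\Phi_{i,e} \preccurlyeq \cost(p_2)$. Adding these coordinatewise gives $\Phi_{s,i} + \Phi_{i,e} \preccurlyeq \cost(p_1) + \cost(p_2) = \cost(p^*_{s,e})$. Now I would invoke the monotonicity of $f(\cdot)$: since $\Phi_{s,i}+\Phi_{i,e}$ is dominated (in the $\preccurlyeq$ sense) by $\cost(p^*_{s,e})$, monotonicity yields $f(\Phi_{s,i}+\Phi_{i,e}) \le f(p^*_{s,e})$. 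Combining this with the filtering hypothesis $\tau < f(\Phi_{s,i}+\Phi_{i,e})$ gives $\tau < f(p^*_{s,e})$.

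Finally I would derive the contradiction from the upper-bound role of $\tau$. By hypothesis $\mathcal{P}_{s,e}\neq\emptyset$, so there is at least one single-one cost shortest path $p^x_{s,e}$ from $v_s$ to $v_e$, and $\tau = \min\{f(p^x_{s,e}) \mid p^x_{s,e}\in\mathcal{P}_{s,e}\}$ is realized by an actual path from $v_s$ to $v_e$. Since $p^*_{s,e}$ is the \emph{optimal} path, $f(p^*_{s,e}) \le f(p^x_{s,e})$ for every such $p^x_{s,e}$, hence $f(p^*_{s,e}) \le \tau$. This directly contradicts $\tau < f(p^*_{s,e})$ obtained above, so no optimal path can pass through $v_i$, which is exactly the claim.

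I expect the only delicate point to be the strictness of inequalities and the precise form of monotonicity invoked: the definition of $f$ given earlier guarantees $f(p)<f(p')$ only under strict domination ($\forall i,\,w_i(p)\le w_i(p')$ and $\exists i,\,w_i(p)<w_i(p')$), so to conclude $f(\Phi_{s,i}+\Phi_{i,e}) \le f(p^*_{s,e})$ in the non-strict case (when $\Phi_{s,i}+\Phi_{i,e}=\cost(p^*_{s,e})$) I would note that equality of cost vectors forces equality of scores, and otherwise strict monotonicity applies; either way the non-strict $\le$ holds, which is all that is needed to contradict the strict hypothesis $\tau<f(\Phi_{s,i}+\Phi_{i,e})$. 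This is the main obstacle, but it is a routine matter of being careful with $\le$ versus $<$ rather than a genuine difficulty.
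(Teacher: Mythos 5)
Your proposal is correct and follows essentially the same route as the paper's proof: decompose the path through $v_i$ into the two sub-paths, apply the \lbop domination $\Phi_{s,i}+\Phi_{i,e}\preccurlyeq \cost(p)$, invoke monotonicity of $f(\cdot)$, and compare against $\tau$, which is realized by an actual path in $\mathcal{P}_{s,e}$. The only differences are cosmetic --- you phrase it as a contradiction on the optimal path itself, while the paper shows every path through $v_i$ is strictly beaten by the $\tau$-achieving path --- and your explicit handling of the strict-versus-non-strict monotonicity edge case is a point the paper silently glosses over.
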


\begin{proof} 
We only need to prove that, for any path $p$ traveling through $v_i$, there
exists a path $p'$ without traveling through $v_i$, such that $f(p')<f(p)$. Obviously, $p$ consists of two
segments: (i) the sub-path $p_{s,i}$ from $v_s$ to $v_i$; and (ii) the
sub-path $p_{i,e}$ from $v_i$ to $v_e$. By the definition of the \lbop, we
have $\Phi_{s,i}\preccurlyeq p_{s,i}$ and $\Phi_{i,e} \preccurlyeq
p_{i,e}$. Thus, $\Phi_{s,i}+\Phi_{i,e}\preccurlyeq p$. By the
monotonicity of the score function $f(\cdot)$, $f(\Phi_{s,i}+\Phi_{i,e})\leq
f(p)$. Let $p'$ be the path in $\mathcal{P}_{s,e}$ whose score is
$\tau$, i.e., $f(p')=\tau$. Obviously, $p'$ is a path from $v_s$ to $v_e$ and it
does not travel through $v_i$, otherwise it is contradict with
$\tau<f(\Phi_{s,i}+\Phi_{i,e})$. Then we have
$f(p')<f(\Phi_{s,i}+\Phi_{i,e})\leq f(p)$.  \eop
\end{proof}

The vertex-filtering algorithm is shown in Algorithm \ref{alg4}. The algorithm
need to perform verification for every vertex in $\bar{V}$, then the
time complexity of the vertex-filtering algorithm is $O(\bar{V})$. $\bar{V}_f$ is 
the set of vertices that cannot be filtered in the
vertex-filtering step. Let $\bar{G}_f(\bar{V}_f,\bar{E}_f)$ be the
induced subgraph of $\bar{V}_f$ on $\bar{G}$. By Theorem
\ref{theorem1}, we only need to compute the optimal path from $v_s$ to
$v_e$ on $\bar{G}_f(\bar{V}_f,\bar{E}_f)$.

\subsection{Query Processing} \label{subsec-query}

We discuss the query processing for two cases: (1) score function is linear; and (2) score function is non-linear. 

For case (1), every pair of border vertex $v_i$ and entry $v_j$ can be calculated a score according to $\Phi_{i,j}$, and this score can be regarded as a lower bound of distance from one vertex subset to another. In addition, For every $SP_{(i,j);p}$ in Skyline-Path-Inner-Index $I_p^S$, the minimum score of the skyline path in $SP_{(i,j);p}$ is exactly the shortest distance from an entry $v_i$ to an exit $v_j$ in $V_p$. By calculating these score, the partition-based index becomes the G-Tree index proposed in \cite{DBLP:conf/cikm/ZhongLTZ13} and then the optimal path problem can be solved.

For case (2), the optimal path problem is NP-hard. A \emph{best-first} branch and bound search algorithm can be utilized to  compute the optimal path on
$\bar{G}_f(\bar{V}_f,\bar{E}_f)$ in the similar way as the algorithm proposed in \cite{DBLP:conf/cikm/YangYGL12}. Note that $\bar{G}$
is not a simple graph because there are several edges from an
entry $v_i$ to an exit $v_j$ in a vertex subset $V_p$. Given a graph $\bar{G}_f$, a starting vertex $v_s$ and an ending vertex $v_e$,
all the possible paths started from $v_s$ in $\bar{G}_f$ can be organized
in a search tree. Here, the root node represents the starting vertex set
$\{v_s\}$. Any non-root node
$C=\{v_s,(v_s,v_1),v_1,\cdots,(v_{l-1},v_{l}),v_l\}$ represents a
path started from $v_s$. $|C|$ is the number of vertices in $C$, i.e., $|C|=|\{v|v\in C\}|$.
For two different nodes $C$ and $C'$ in the search tree, $C$ is the
parent of $C'$ if they satisfy the following two conditions: (i)
$C\subset C'$ and $|C'|=|C|+1$; and (ii) $C'\setminus C$ is an \textit{edge-node} set $\{(v_i,v_j), v_j\}$, 
where $v_i$ and $v_j$ are the ending vertex of path
$C$ and $C'$ respectively. In each iteration, a node $C$ is
dequeued from the min-heap $H$. Algorithm extends $C$ by processing the
children of $C$. Assume that the ending vertex of $C$ is $v_i$. For each edge $(v_i,v_j)$ in 
$\bar{G}_f$, algorithm adds the edge-node set $\{(v_i,v_j),v_i\}$
into $C$ to get a child $C'$ of $C$. Note that there may exist several edges
from $v_i$ to $v_j$ when $v_i\in V_p.entry$ and $v \in V_p.exit$ and every
edge represents a skyline path from $v_i$ to $v_j$ in $G_p$. The similar pruning strategies in \cite{DBLP:conf/cikm/YangYGL12} can be used to decide whether
$C'$ can be pruned or not. If $C'$ cannot be pruned, it will be inserted into the
min-heap $H$. Algorithm terminates when $H$ is empty or $f(C)$ are not less that the minimum score of the path from $v_s$ to $v_e$ that has been searched 
for the top element $C$ in $H$. 

The contour skyline set can be used to improve the query efficiency. For an entry $v_i$ and an exit $v_j$ in a cluster
$V_p$, we use $\e_{i,j}=\{(v_i,v_j)^1,\cdots,(v_i,v_j)^m\}$ to denote the
multiple edges from $v_i$ to $v_j$. Each $(v_i,v_j)^\alpha\in \e_{i,j}$ represents
a skyline path in $\skyp_{(i,j);p}$. In each iteration, a node $C$ is to be expanded. Let
$v_i$ be the ending vertex of $C$. If $v_i$ is an entry of a cluster $V_p$($V_p\neq
V_s $ and $V_p\neq V_e$), then for each $v_j\in V_p.exit$, we do not
need to add every edge-node set $\{(v_i,v_j)^\alpha,v\}(1\leq \alpha \leq m)$ into $C$
to get a child $C'$ of $C$. Let $\cs_{(i,j);p}=\{cp_1,\cdots,cp_r\}$ be the
contour skyline set of $\skyp_{(i,j);p}$. Each $cp_x\in \cs_{(i,j);p}$
corresponds to a group $R_x$ of the skyline paths in
$\skyp_{(i,j);p}$ (recall $r$-partition), then $cp_x$ corresponds to a
group $\e_{i,j}^x$ of edges in $\e_{i,j}$, where $\e_{i,j}^x=\{(v_i,v_j)^{x_1},\cdots,(v_i,v_j)^{x_t}\}$, $\e_{i,j}^x \subset
\e_{i,j}$. Each $(v_i,v_j)^{x_\beta}\in \e_{i,j}^x$ represents a skyline path
in $R_x$. $cp_x$ can be considered as an edge from $v_i$ to $v_j$ and then $\{cp_x, v_j\}$ can be added 
into $C$ to get a virtual child $C'$ of $C$. $C'$
corresponds to a children group $C'_x=\{C'_{x_1},\cdots,C'_{x_t}\}$
of $C$, where each $C'_{x_\beta}(1\leq \beta \leq t)$ is a child of $C$,
$C'_{x_\beta}$ is obtained by adding the edge-node set $\{(v_i,v_j)^{x_\beta},v_j\}$ into
$C$. Because $cp_x$ is the \lbop of $R_x$, then $cp_x$ is the \lbop
of $\e_{i,j}^x$. Thus, we have $C'\prec C'_{x_\beta}$ for any $\beta,1\leq
\beta \leq t$. If the virtual node $C'$ can be pruned, then all
$C'_{x_\beta}$ in $C'_x$ can be pruned.

\section{Performance Study}\label{performance}
In this section, we test the partition-based index on six
real-life networks including road networks, social network, etc. All experiments were done on a 3.0 GHz
Intel Pentium Core i5 CPU PC with 32GB main memory, running on
Windows 7. All algorithms are implemented by Visual C++.

The details of real-life networks used in experiments are shown in Table \ref{table1}, where CAITN is the Chicago anonymized
internet trace network, CARN and EURN are two road networks of California and Eastern USA respectively, EuAll is an email communication network, Slashdot is a social network about technology related news, and HepPh is a citation network from the e-print arXiv.

\comment{
We employ the following six real network datasets.

\stitle{CARN}: The \textbf{Ca}lifornia \textbf{R}oad
\textbf{N}etwork is an undirect graph with 21,047 vertices and
21,692 edges. A vertex represents an intersection or a road endpoint and an edge represents a road
segment.

\stitle{EURN}: This network describes the \textbf{E}astern \textbf{U}SA \textbf{R}oad \textbf{N}etwork and it is an undirected graph with 3,598,623 vertices and 4,354,029 edges. 

\stitle{CAITN}: The \textbf{C}hicago \textbf{A}nonymized
\textbf{I}nternet \textbf{T}races \textbf{N}etwork is a
communication network on Chicago. 
In these datasets, we treat a set
of IP-addresses as a subnet if they have the same first $p$ bits.
Each subnet is considered as a vertex and there is an edge between
two subnets if any two IP-addresses in the two distinct subnets are
connected. 
It is an undirected graph with 4,837 vertices
and 17,426 edges.

\stitle{EuAll}: EuAll is an email communication network, email users
are vertices and the communications between them are edges. It is a
directed graph with 11,521 vertices and 32,389 edges.

\stitle{Slashdot}: Slashdot is a technology related news website known
for its specific user community. The users are nodes, and an
edge from $v_i$ and $v_j$ represents user $v_i$ agrees with user $v_j$'s
comment. We generate a directed graph with
20,639 vertices and 187,672 edges.

\stitle{HepPh}:  HepPh citation graph is a directed graph extracted
from the e-print arXiv with 34,546 papers and 421,578 edges. If a
paper $v_i$ cites paper $v_j$, the graph contains a directed edge from
$v_i$ to $v_j$.
}

\comment{
\stitle{EuAll}: EuAll is an email communication network, each email
sender or receiver is considered as a vertex and the edges between
them are considered as the email communication. This network is a directed
graph with 11,521 vertices and 32,389 edges.

\stitle{Slashdot}: Slashdot is a technology related news website
known for its specific user community. The users are nodes, and an
edge between $v_i$ and $v_j$ represents user $v_i$ agrees with user $v_j$'s
comment. We generate a directed graph with 20,639 nodes and 87,672
edges.}

%
%

\comment{
\stitle{HepPh}:  HepPh citation graph is a directed graph extracted
from the e-print arXiv with 34,546 papers with 421,578 edges. If a
paper $i$ cites paper $j$, the graph contains a directed edge from
$i$ to $j$.
}

\begin{table}
\begin{tabular}{c|c|c|c}
\hline Dataset & Category & Number of vertices & Number of edges \\
\hline CAITN  & IP network & 4,837  & 17,426 \\
\hline EuAll  & email network & 11,521 & 32,389 \\
\hline Slashdot & social network & 20,639 & 187,672 \\
\hline HepPh    & citation network & 34,546  & 421,578 \\
\hline CARN   & road network & 21,047 & 21,692 \\
\hline EURN   & road network & 3,598,623 & 4,354,029 \\
\hline
\end{tabular}
\vspace{0.2cm}
\caption{Dataset Characteristics} \label{table1}
\end{table}

For each network, we randomly assigned $d$ kinds of cost to every edge ($d\in
\{2,3,4,5\}$). We randomly generate 1,000 pairs of vertices and
query the optimal path for every pair . The reported querying
time is the average time on each dataset. The score
function is $f(w_1,\cdots,w_d)=\sum_{i=1}^d w_i^2$.

We compare our method with A* algorithm\cite{DBLP:journals/jacm/MandowP10}, genetic algorithm(GA)\cite{DBLP:conf/isat/Chomatek15} and LEXGO* algorithm\cite{DBLP:journals/eor/PulidoMP14}, which are three the state of the art heuristic algorithms for querying skyline paths over multi-cost graphs. Note that skyline paths essentially are a candidate set for an optimal path query, thus more time is necessary to seek out the optimal path from the skyline paths for these methods. The experimental results present the querying time of skyline path by these heuristic methods are always much larger than the optimal path by our method, even though the time are not counted in for finding an optimal one from all the skyline paths. We also compare our method with BF-Search in \cite{DBLP:conf/cikm/YangYGL12}, which uses a naive index to find the optimal path in the multi-cost networks under the non-linear functions.

\begin{table*}
\center {
\begin{tabular}{c|ccccc|ccccc}
\hline & \multicolumn{5}{|c}{$d=2$} & \multicolumn{5}{|c}{$d=3$}\\
Dataset  & A*  & GA & LEXGO* & BF-Search  & PB-Index  & A* & GA & LEXGO* & BF-Search & PB-Index \\
\hline
CAITN    & 28.37   & 8.76   & 10.13  & 0.0374 & 0.0041 & 47.26   & 12.42  & 16.52  & 0.0515 & 0.0071 \\
CARN     & 121.25  & 36.87  & 32.71  & 0.0733 & 0.0115 & 219.38  & 68.73  & 79.83  & 0.0851 & 0.0189 \\
EuAll    & 211.76  & 92.28  & 79.27  & 0.1471 & 0.0062 & 336.52  & 155.34 & 132.46 & 0.2019 & 0.0113 \\
Slashdot & 879.98  & 193.91 & 201.36 & 4.8139 & 0.0871 & 1127.62 & 316.77 & 289.71 & 6.2506 & 0.1027 \\
HepPh    & 1934.52 & 303.64 & 288.71 & 17.653 & 0.2194 & 3253.43 & 589.32 & 573.13 & 21.467 & 0.2938 \\
\hline
\end{tabular}
\vspace{0.2cm}
\caption{Online Querying Time in Second}\label{table2} }
\end{table*}

\begin{figure}[htbp]
\begin{center}
\subfigure[impact of $k$]{\label{fig7-1}
       \includegraphics[height=1.25in]{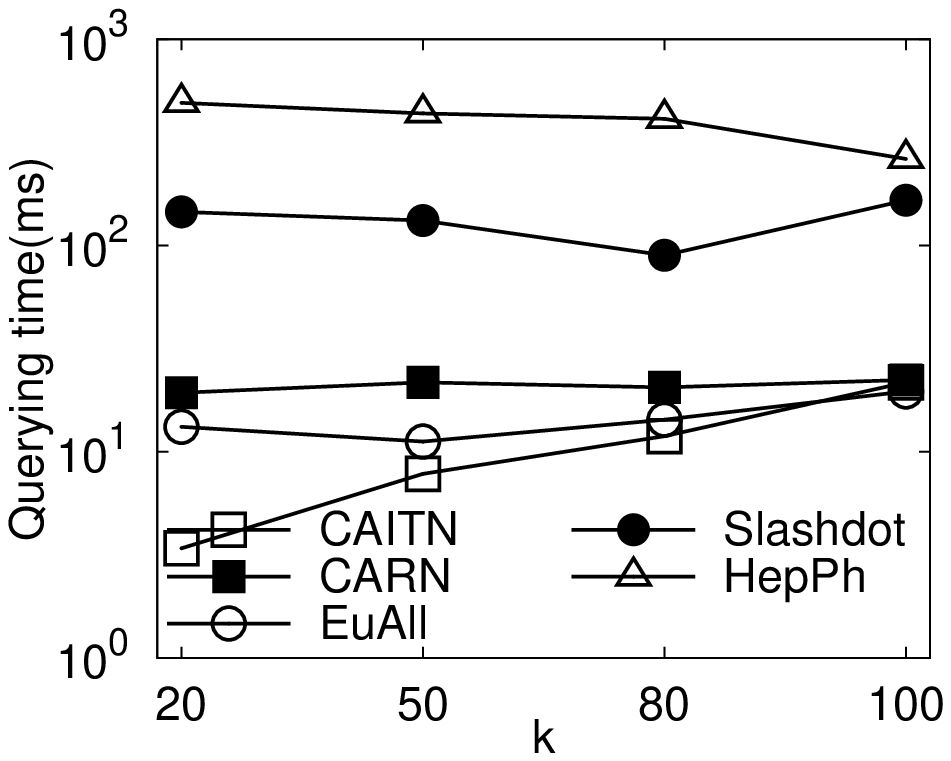}}\quad
\subfigure[impact of $r$]{\label{fig7-2}
       \includegraphics[height=1.25in]{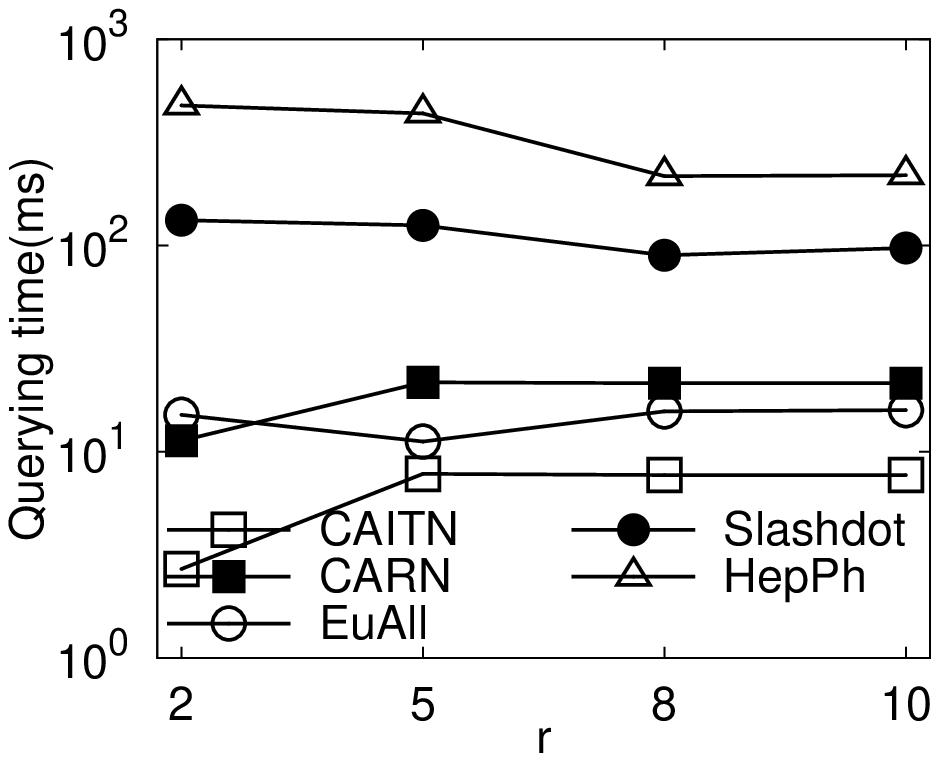}}
\end{center}
\caption{Impact of $k$ and $r$} \label{fig7}
\end{figure}

\stitle{Exp-1: Querying time}: As shown in Table \ref{table2}, we
investigate the querying time on five datasets by comparing
the partition-based index with A* algorithm, genetic algorithm, LEXGO* algorithm and BF-Search for $d=2$ and $d=3$. 
In this experiment, the number of vertex subsets
is $k=50$. For all networks, the querying time of the partition-based index are always in order of magnitude less than the others. 
The reason is that the partition-based index pre-computes the LBOP, skyline paths and contour skyline for any pair of entry and exit in every vertex subset and a large proportion
of the vertices are filtered in the vertex-filtering phase.

\stitle{Exp-2: Index size}: The index size is shown in Table
\ref{table3}. We compare the size of the partition-based index with the BF-Search for $d=2$ and $d=3$. A* algorithm, genetic algorithm and LEXGO* algorithm are not listed here because they do not use index.
The number $k$ is also $50$. We 
find the size of the the partition-based index are much smaller than
BF-Search. 
These results indicates the partition-based index is space efficient and it is more suitable for the large networks.

\begin{table}
\center {
\begin{tabular}{c|cc|cc}
\hline & \multicolumn{2}{|c}{$d=2$} & \multicolumn{2}{|c}{$d=3$}\\
Dataset & BF-Search  & PB-ndex  & BF-Search & PB-Index  \\
\hline
CAITN     & 115.99   & 6.21   & 203.78  & 13.52  \\
CARN      & 2600.68  & 93.85  & 4398.95 & 163.98 \\
EuAll     & 796.33   & 20.83  & 1333.86 & 39.23  \\
Slashdot  & 1746.39  & 47.21  & 3136.24 & 81.75  \\
HepPh     & 4124.96  & 138.74 & 6460.35 & 224.02 \\
\hline
\end{tabular}
\vspace{0.2cm}
\caption{Index Size in MB}\label{table3}}
\end{table}

\begin{table}
\center {
\begin{tabular}{c|c|c|c|c|c}
\hline  Dataset & $|\bar{V}|$ & $|\bar{E}|$ & $|\bar{V}_f|$ & $|\bar{E}_f|$ & $Avg.|\skyp_{(i,j);x}|$ \\
\hline
CAITN    & 746   & 19,132    & 368   &   9,560  & 11.17  \\
CARN     & 1,268 & 27,338    & 539   &  12,057  & 6.02   \\
Enron    & 1,073 & 29,418    & 471   &  13,715  & 14.78  \\
Slashdot & 1,782 & 293,877   & 936   &  198,429 & 43.16  \\
HepPh    & 3,832 & 1,718,753 & 1,297 &  646,396 & 55.31  \\
\hline
\end{tabular}
\vspace{0.2cm}
\caption{Impact of Vertex-Filtering}\label{table4} }
\end{table}

\stitle{Exp-3: Impact of vertex-filtering}: We investigate the
effectiveness of the vertex-filtering algorithm in Table \ref{table4}.
In this experiment, $k=50$ and $d=2$. From Table
\ref{table4}, we find the vertex-filtering algorithm can filter at
least $50\%$ vertices for each dataset. We find $|\bar{E}|$ may be
larger than $|E|$, where $|\bar{E}|$ and $|E|$ are the number of
vertices in the shrunk graph $\bar{G}$ and the original graph $G$
respectively. It is because that there are multiple edges between every 
pair of entry $v_i$ and exit $v_j$ in each $V_p$ ($V_p\neq V_s$ and
$V_p \neq V_e$) in $\bar{G}$. $Avg.|\skyp_{(i,j);p}|$ in Table \ref{table4} 
is the average number of the edges between any pair of entry $v_i$ and exit $v_j$ 
in the same vertex subset. In fact, for each pair of entry $v_i$ and exit $v_j$, $|\skyp_{(i,j);p}| \ll
|P_{(i,j);x}|$, where $P_{(i,j);x}$ is the number of all the possible paths
from $u$ to $v$ in $G_x$. Therefore, even though $|\bar{E}|>|E|$,
our algorithm on $\bar{G}$ are more efficient than that on $G$
because many paths from an entry to an exit have been
filtered by $\skyp_{(i,j);p}$. In addition, each edge $(v_i,v_j)^\alpha$ from
an entry $v_i$ to an exit $v_j$ in $\bar{G}$ represents a skyline path from $v_i$ to $v_j$. When algorithm expands a node $C$ whose ending vertex is
$v_i$, $C$'s children in $\bar{G}$ are more possible to be pruned than
that in $G$.

\stitle{Exp-4: Impact of $k$ and $r$}: We investigate the impact of the
number $k$ of the vertex subsets and the size $r$ of the contour skyline set. 
The experimental results are shown in Fig.~\ref{fig7}. For $k$, an appropriate
$k$ makes the number of the entries and the exits smaller in $\bar{G}$ and
thus the querying time is less. A larger or smaller $k$ will increase the
querying time. In Fig.~\ref{fig7-1}, we find the optimal $k$ are
distinct for the different datasets. For example, the optimal $k$ is 50
for Euall dataset but it is 80 for Slashdot dataset. For $r$, the skyline points in
a group are more proximity under a larger $r$ and then algorithm is
more effective to prune a virtual node $C'$ as the discussion in
section \ref{subsec-query}. On the other hand, a larger $r$ results
in the more contour skyline points and then the querying time increases. In two
extreme cases, when $r=1$, the only contour skyline point is the
\lbop of $\skyp_{(i,j);p}$, and when $r=|\skyp_{(i,j);p}|$, the contour
skyline set is exactly $\skyp_{(i,j);p}$. For these two cases, the
contour skyline set cannot work well. We find the optimal $r$ are also
distinct for the different datasets. The optimal $r$ is 5 for EuAll dataset and 
it is 8 for Slashdot and HepPh datasets.

\stitle{Exp-6. Scalability}: We evaluate the scalability of our method in Fig.\ref{fig8}. We investigate the querying time by varying the number of vertices from one million to three millions on EURN dataset for $d=2$ and $d=3$. For each graph, $k=10^{-3}n$, where $n$ is the number of the vertices in graph. We compare our method with BF-Search, GA algorithm and LEXGO* algorithm. The experimental results show our method are always in order of magnitude faster than others and it can perform efficiently even though the number of vertices is larger than three millions.  
It indicates our method are also suitable for large multi-cost graphs.

\begin{figure}[htbp]
\begin{center}
\subfigure[$d=2$]{\label{fig8-1}
       \includegraphics[height=1.15in]{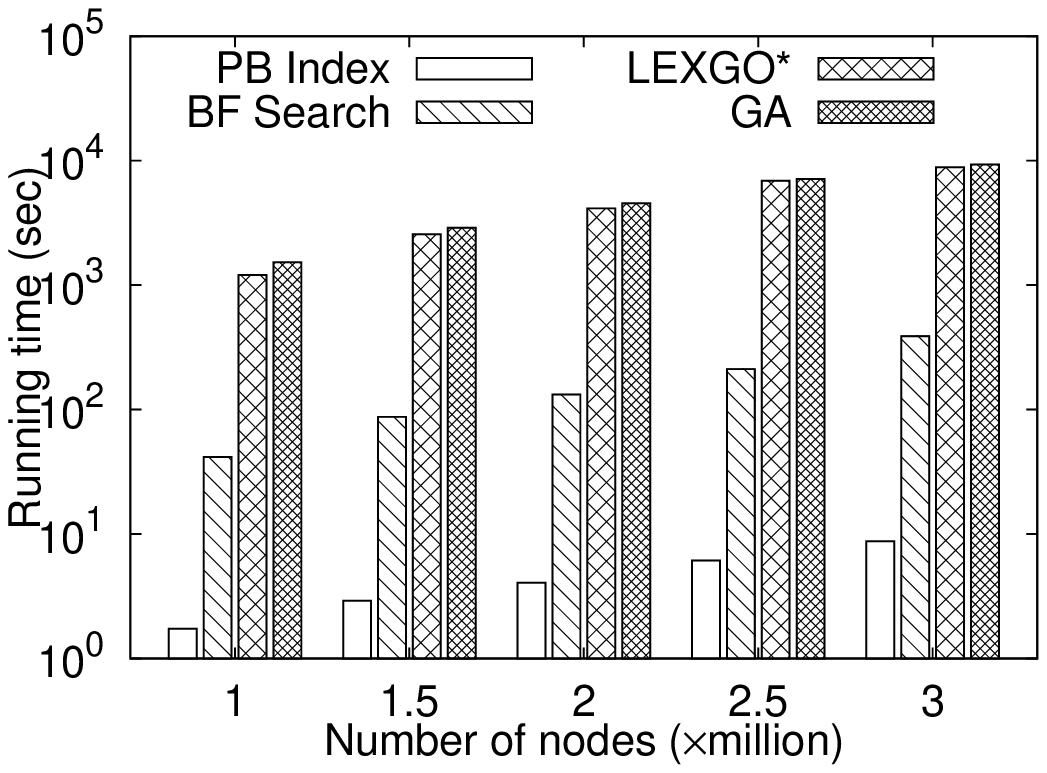}}
\subfigure[$d=3$]{\label{fig8-2}
       \includegraphics[height=1.15in]{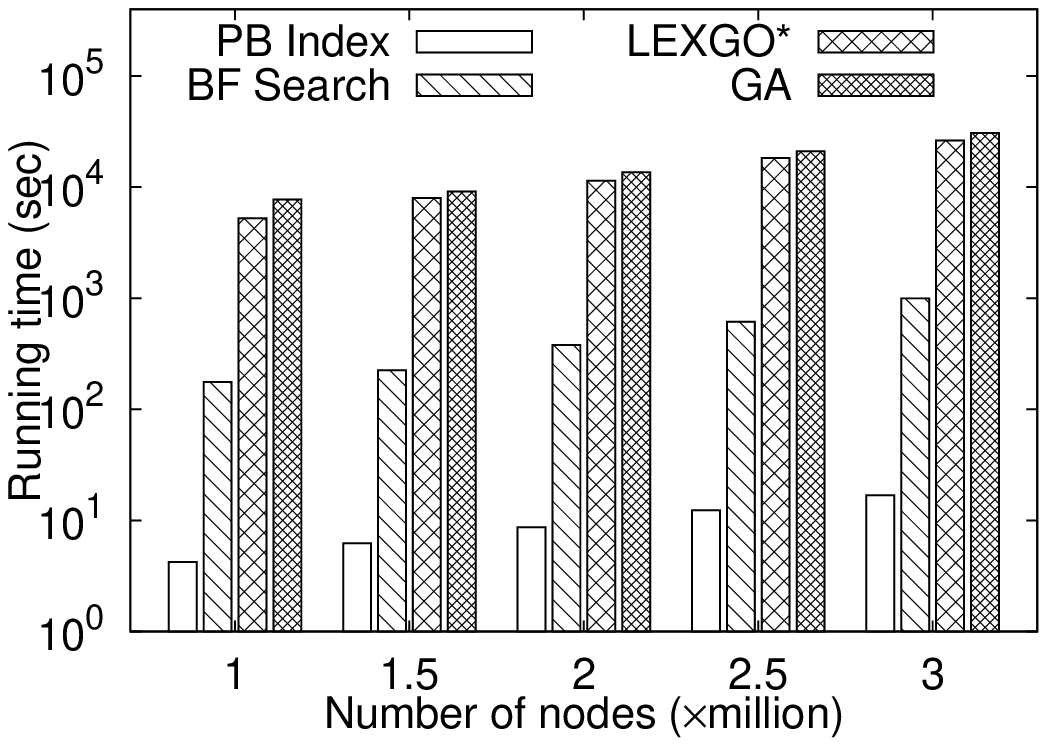}}
\end{center}
\caption{Adaptivity to large graphs} \label{fig8}
\end{figure}


%
\section{Related Work} \label{sec-related}

 

The existing works for the shortest 
path problem propose various index techniques to enhance the efficiency 
of the shortest path query for large graphs. \emph{The shortest
path quad tree} scheme is proposed in \cite{DBLP:conf/sigmod/SametSA08}, 
which pre-computes the shortest paths for every two vertices in a graph and 
organizes them by a quad tree. This method is not applicable for the optimal path 
problem in the multi-cost graphs. Because the score functions given by different users
may be different, the quad tree constructed according to one score
function cannot answer the optimal path query under the other functions.
Xiao et al. in \cite{DBLP:conf/edbt/XiaoWPWH09} proposes the concept
of the compact BFS-trees where the BFS-trees are compressed by
exploiting the symmetry property of the graphs. Wei et al. in
\cite{DBLP:conf/sigmod/Wei10} proposes a novel method named TEDI,
which utilizes the tree decomposition theory to build an index and
process the shortest path query. Cheng et al. in
\cite{DBLP:conf/sigmod/Cheng12} proposes a disk-based index for the single-source shortest path or distance queries. This
index is a tree-structured index constructed based on the concept of
vertex cover and it is I/O-efficient when the input graph is too
large to fit in main memory. Rice et al. in
\cite{DBLP:journals/pvldb/RiceT10} introduces a new shortest path
query type in which dynamic constraints may be placed on the
allowable set of edges that can appear on a valid shortest path.
They formalize this problem as a specific variant of formal language
constrained shortest path problems and then they propose the generalized shortest path queries in the following work\cite{DBLP:conf/icde/RiceT13}.  Zhu et al. in \cite{DBLP:conf/sigmod/ZhuMXLTZ13} presents AH index to narrow the gap between theory and practice. 
Landmark-based techniques have
been widely used to estimate the distance between two vertices in a
graph in many applications\cite{DBLP:conf/soda/GoldbergH05, DBLP:conf/icde/Qiao12,DBLP:conf/sigmod/AkibaIY13}. Goldberg et
al. in \cite{DBLP:conf/soda/GoldbergH05} choose some anchor vertices
called landmark and pre-computes for each vertex its graph distance
to all anchor vertices. A distance vector is created from these
distances. A lower bound derived from the distance vector can be
used by $A^*$ algorithm to guide the shortest path search.
Qiao et al. in \cite{DBLP:conf/icde/Qiao12} propose a query-dependent local
landmark scheme, which identifies a local landmark close to the
specific query nodes and provides a more accurate distance
estimation than the traditional global landmark approaches. The latest work\cite{DBLP:conf/sigmod/AkibaIY13} proposes a new exact method based on \textit{distance-aware 2-hop cover} for the distance queries. 
All the above methods utilize the following property in the shortest path: any sub-path of a shortest path is
also a shortest path. Therefore, they only need to maintain the
shortest paths among the vertices in the index and compute the shortest path by
concatenating the sub shortest paths in the index. However, in the
multi-cost graphs, this property does not hold. Therefore, these methods cannot solve the
optimal path problem in the multi-cost graphs.

In recent years, several works\cite{Martins1984236,Delling:2009,Mandow:2005,DBLP:conf/isat/Chomatek15,DBLP:journals/eor/PulidoMP14,DBLP:journals/jacm/MandowP10} study the multi-criteria 
shortest path (MCSP) problem on multi-cost graphs. Given a starting vertex and an ending vertex, it is to find all the 
skyline paths from the starting vertex to the ending vertex. Most existing works on MCSP are heuristic algorithm based on the following property: any sub-path 
of a skyline path is also a skyline path. To compute a skyline path $p$, these methods needs to expand all the skyline paths from the starting vertex to a vertex $v$ for every $v \in p$. The difference between MCSP and our problem is as follows. MCSP is to find all skyline paths but our problem is only to find one path that is the optimal under the score function. It is obvious that skyline paths is a candidate set of the optimal path. However, the time cost is too expensive to find an optimal path by exhausting all skyline paths. Moreover, these works does not develop any index technique to facilitate the skyline path querying. Mouratidis et al. in \cite{DBLP:conf/icde/MouratidisLY10}
studies the skyline queries and the top-k queries on the multi-cost
transportation networks. For any vertex $v$ in graph, all the distances
on the different dimensions between $v$ and the query point form the cost
vector of $v$. The definition of the cost vector in this work
is different with ours and the query results are points but
not paths. Therefore, the methods in this work cannot applied to the
optimal path problem in this paper.

\comment{
There are many works to study the skyline query based on shortest
path\cite{DBLP:conf/sigmod/PapadiasTFS03, DBLP:conf/edbt/ChenL08,
DBLP:conf/dasfaa/ZouCOZ10, DBLP:conf/icde/MouratidisLY10}. Papadias
et al. first introduces dynamic skyline problem in
\cite{DBLP:conf/sigmod/PapadiasTFS03}. Chen and Xiang proposes MSQ
algorithm for dynamic skyline problem in
\cite{DBLP:conf/edbt/ChenL08}, where the dimension function can be
any metric function. Zou et al. in \cite{DBLP:conf/dasfaa/ZouCOZ10}
studies dynamic skyline queries in a large graphs. Given some query
points, for any vertex in graph, the distances between it and query
point is considered as the cost vector of this vertex. The objective
of this work is to find skyline vertices based on this definition.
However, all these work assume that graph is a single-one cost
graph. Moreover, the complexity of their
problem is less than our problem.}

\section{Conclusion} \label{conc}
In this paper, we study the problem of finding the optimal route in
the multi-cost networks. We prove this problem is NP-hard and propose a novel partition-based index with contour skyline techniques.
We also propose a vertex-filtering algorithm to facilitate the query processing. We conduct extensive experiments and the experimental results validate
the efficiency of our method.


{
\small
\bibliographystyle{abbrv}
\bibliography{cikm}
}

\end{document}